\documentclass{article}
\usepackage{setspace}
\usepackage{braket}
\usepackage{bbm}
\usepackage{tcolorbox}
\usepackage{graphicx}
\usepackage{amsthm}
\usepackage{hyperref}
\hypersetup{
    colorlinks,
    citecolor= orange,
    filecolor= blue,
    linkcolor= blue,
    urlcolor= green
}
\usepackage{multicol}
\usepackage{mathrsfs}
\usepackage{appendix}
\usepackage{amssymb}
\usepackage{relsize}
\usepackage{amsthm}
\usepackage{ amssymb }
\usepackage{amsmath}
\usepackage[utf8]{inputenc}
\usepackage[english]{babel}
\usepackage[a4paper, total={6.6in, 8.5in}]{geometry}
\usepackage{amsmath}

\newtheorem{definition}{Definition}
\newtheorem{definition2}{Lemma}

\newtheorem{definition4}{Theorem}
\newtheorem{definition5}{Remark}

\newtheorem{Co}{Corollary}
\usepackage[T1]{fontenc}
\usepackage{lmodern}

\newtheorem{definition9}{Proposition}
\usepackage{tikz-cd}
\doublespacing




\begin{titlepage}

\title{Geometric Measures of Complexity for Open and Closed Quantum Systems }
\author{Alberto Acevedo, Antonio Falcó \\ 
Departamento de Matemáticas, Física y Ciencias Tecnológicas \\
Universidad Cardenal Herrera-CEU, CEU Universities,\\ 
Calle San Bartolome, 55, Alfara del Patriarca 46115, Valencia, Spain}
\date{}
\end{titlepage}
\begin{document}
\singlespacing

\maketitle

\vspace{0.9cm}
\textbf{Abstract:}
The unitary dynamics of quantum systems can be modeled as a trajectory on a Riemannian manifold. This theoretical framework naturally yields a purely geometric interpretation of computational complexity for quantum algorithms, a notion originally developed by Michael Nielsen (Circa, 2007). However, for nonunitary dynamics, it is unclear how one can recover a completely geometric characterization of Nielsen-like geometric complexity. The main obstacle to overcome is that nonunitary dynamics cannot be characterized by Lie groups (which are Riemannian manifolds), as is the case for unitary dynamics. Building on Nielsen's work, we present a definition of geometric complexity for a fairly generic family of quantum channels. These channels are useful for modeling noise in quantum circuits, among other things, and analyze the geometric complexity of these quantum channels. 

\begin{multicols}{2}

\section{Introduction}
\;\;\; In the mid 2000's the foundational work leading to a geometric characterization of the complexity of quantum computation was developed by Nielsen and his collaborators in \cite{NielGeo} \cite{dorth1} \cite{dorth2} among other papers. These ideas were later pushed forward in works such as \cite{brandt} in 2009, and then further pushed to other realms of physics such as Black Holes and the Holographic principal in works such as \cite{suss1} \cite{suss2} \cite{suss3}. To the extent that the authors are aware, interest in this geometric approach to the characterization of the complexity of a quantum computation, in particular those carried out by quantum circuits, has remained stifled since its introduction by Nielsen et al. Furthermore, no notion of geometric complexity has been devised which encapsulates noisy quantum computation; e.g. noisy quantum circuits. Many traditional ways of characterizing complexity of quantum computation exists, such as BQP \cite{Niel}, the class of problems solvable by a quantum computer in polynomial time with bounded error, and QMA \cite{Niel}, the quantum analog of NP but for quantum computers. Other methods of determining the complexity of a quantum computation involve circuit depth for the case of universal quantum computation. The idea of geometric complexity differentiates itself from those notions of complexity just mentioned by explicitly leveraging the underlying geometric structure associated with the dynamics generating the quantum computation of interest in order to define a natural complexity of said computations via geodesics on the respective Riemannian manifolds.

Quantum computation may be  described by quantum circuits or quantum annealers whose dynamics are in turn described by elements of a finite-dimensional Lie group. Lie groups are smooth manifolds that admit a Riemannian metric, or more generally a Finsler metric \cite{NielGeo}, which allows for the study of geodesics on this manifold in the sense described by \cite{NielGeo} \cite{dorth1} \cite{dorth2}. These geodesics may be interpreted as optimal length/time quantum computations; e.g. the shortest circuit connecting some input to some desired output given hardware constraints hich characterize the underlying Riemannian metric. This is done by via the relationship between circuit gate count and geodesics on the Lie Group in question \cite{NielGeo}. The question regarding the type of metric structure that one may use to derive
geodesics and hence study optimal quantum computation is still unexhausted and much is left to be desired; in \cite{NielGeo} they present various approaches including Finsler metrics but we will focus on Riemannian metrics. In this work we will focus on those Riemannian metrics characterized by a Hermitian form, the article will be self-contained and all relevant concepts will be developed as we progress.

\vspace{2mm}

The main goal of this work is to develop/motivate the geometric notions of complexity for quantum computations, and quantum dynamics in general, which Nielsen studied in \cite{NielGeo}. We will not focus on the computational challenges of such geometric measures for complexity but rather focus on developing appropriate measures of geometric complexity and studying their properties while motivating said measures with physical intuition. After introducing geometric complexity, we will present/motivate and generalization of geometric complexity for a generic family of quantum channels (this will be referred to as the channel complexity). Basic properties of the channel complexity will be proven and an example will be studied. Furthermore, a simple example will be presented in the penultimate section in order to show an instance of geometric complexity reduction for  noisy channels. Finally, we will also use the framework developed in this paper to broaden the scope of an algebraic framework for quantum computation and complexity developed in recent times by one of the authors of this paper and his collaborators in \cite{Daniela}.

\section{Geometric Complexity for Quantum Computation}
\label{eqn:sec2}
\;\;\; Consider the Riemannian manifold $(SU(N), \langle\cdot,\cdot\rangle_{\Omega})$, where $SU(N)$ is the Lie group of unitary matrices with determinant one ($N^{2}-1$ dimensional manifold). We equipped with $SU(N)$ the following right-invariant Riemannian metric as follows. Assume that
\begin{equation}
\mathfrak{su}(N) =\mathrm{span}\{\mathbf{\hat{E}}_i: 1 \le i \le N^2-1\}
\end{equation}
Then each $\mathbf{\hat{A}}\in \mathfrak{su}(N)$ can be written as
\begin{equation}
\mathbf{\hat{A}} = \sum_{i=1}^{N^2-1} A_i\mathbf{\hat{E}}_i,\end{equation}
where $A_i \in \mathbb{C}.$ Now, we define
\begin{equation}
\mathbf{Vec}(\mathbf{\hat{A}}) := \begin{pmatrix}
A_1 \\
A_2 \\
\vdots \\
A_{N^2-1}
\end{pmatrix} \in \mathbb{C}^{N^2-1}.
\end{equation}
Given an Hermitian definite positive matrix $\Omega \in \mathbb{C}^{(N^2-1)\times(N^2-1)}$ we introduce the inner product $\langle\cdot,\cdot\rangle_{\Omega}:\mathfrak{su}(N)\times \mathfrak{su}(N)\rightarrow \mathbb{R}$ 
as
\begin{equation}
\label{eqn:metric}
 \big\langle \mathbf{\hat{A}},\mathbf{\hat{B}}\big\rangle_{\Omega}:= \frac{1}{N^{2}-1}\mathbf{Vec}(\mathbf{\hat{A}})^{\dagger}\Omega
 \mathbf{Vec}(\mathbf{\hat{B}}),
\end{equation}
with 
\begin{equation}
\Omega = \begin{pmatrix}l_{1} & & \\ & \ddots & \\ & & l_{N^{2}-1}\end{pmatrix}
\end{equation}
and 
\begin{equation}
\mathbf{Vec}(\mathbf{\hat{A}}) = \begin{pmatrix} A_{1} \\ \vdots\\ A_{N^{2}-1} \end{pmatrix} 
\;\;\;\mathbf{Vec}(\mathbf{\hat{B}}) = \begin{pmatrix} B_{1} \\ \vdots\\ B_{N^{2}-1} \end{pmatrix}.
\end{equation}
Here, the matrix $\Omega$ is diagonal with respect to the basis chosen for $\mathfrak{su}(N)$; without loss of generality we will only consider $l_{i}\geq 1$. The canonical choice is the set of all Pauli strings for example. The vectorization $\mathbf{Vec}(\mathbf{\hat{A}})$ of an arbitrary Hermitian matrix $\mathbf{\hat{A}}$ is built from its decomposition into a linear coordination of elements in $\mathfrak{su}(N)$, e.g. when $\mathbf{\hat{A}}=\sum_{i=1}^{N^{2}-1}A_{i}\mathbf{\hat{E}}_{i}$ ($\mathfrak{su}(N) = span\{\mathbf{\hat{E}}_{i}\}_{i}$) then we write $\mathbf{Vec}(\mathbf{\hat{A}}) = (A_{1}, \cdots, A_{N^{2}-1})^{T}$. The Riemannian metric $\big\langle \cdot, \cdot \rangle_{\Omega}$ is a case of a Hermitian form. When $\Omega$ is the identity matrix, the corresponding metric $\langle\cdot,\cdot\rangle_{\mathbb{I}_{N^{2}-1}}$ is the usual bi-invariant metric that is equipped to $SU(N)$, yielding a homogeneous manifold. The geodesics of $(SU(N),\;\; \langle\cdot,\cdot\rangle_{\mathbb{I}_{N^{2}-1}})$ are known to be characterized by one-parameter unitary groups $e^{-it\mathbf{\hat{H}}}$ where $\mathbf{\hat{H}}\in\mathfrak{su}(N)$. For a non-homogeneous $\Omega$ however, some directions on the tangent space of $SU(N)$ at some point $p$, $T_{p}SU(N)$ will be more stretched out than others. For example, consider the following non-homogeneous right-invariant Riemannian metric, which adds a cost term $q$ to interactions involving more than 2 qubits for a total system of $n$ qubits, meaning that $N=2^{n}$. Namely, 
\small
\begin{equation}
\label{eqn:direnct}
\bar{\Omega} = \begin{pmatrix} 1 & & \\ & \ddots & \\ & & 1\end{pmatrix}_{N^\prime\times N^{\prime}}\bigoplus \begin{pmatrix} q & & \\ & \ddots & \\ & & q\end{pmatrix}_{\bar{N}\times \bar{N}}
\end{equation}
\normalsize
then
\begin{equation}
\label{eqn:qstuff}
 \langle \mathbf{\hat{A}},\mathbf{\hat{B}}\rangle_{\bar{\Omega}}:=\frac{1}{4^n-1}Tr\big\{\mathbf{Vec}(\mathbf{\hat{A}})^{\dagger}\bar{\Omega}\mathbf{Vec}(\mathbf{\hat{B}})\big\}.
\end{equation}
Where $N^{'}$ represents the dimensions associated between interactions involving one and two cubits and $\bar{N} = 4^n-1-N^{\prime}$. 
Here, the first matrix in the direct sum appearing in (\ref{eqn:direnct}) constitutes the subspace corresponding to Pauli strings of one and two-qubit interactions; the second term in the direct sum corresponds to the complimentary subspace. The normalization constant $\frac{1}{4^n-1}$ is conventional but not compulsory. The lack of homogeneity in the metric characterized by $\Omega_{1,2}$ leads to geodesics that are not representable as one-parameter unitary groups. 
\subsection{Carnot-Carath\'eodory metric}
\;\;\; It is known that any sub-Riemannian manifold may be equipped with an intrinsic metric known as the Carnot-Carath\'eodory metric  \cite{grom}; for the case of Riemannian metrics $\langle\cdot,\cdot\rangle_{\Omega}$ defined above, the corresponding Carnot-Carath\'eodory metric is defined as follows.
\begin{equation}
\label{eqn:cara}
\mathscr{D}_{\Omega}(\mathbf{\hat{U}},\mathbf{\hat{V}}) := \inf_{\gamma}\int_{t_{0}}^{t_{f}}\sqrt{\langle \gamma^{'}(s),\gamma^{'}(s)\rangle_{\Omega}}ds
\end{equation}
where the infimum is taken along all smooth curves $\gamma:[t_{0},t_{f}]\rightarrow SU(N)$ such that $\gamma(t_{0}) = \mathbf{\hat{U}}$, $\gamma(t_{f}) = \mathbf{\hat{V}}$. For simplicity's sake, we will let $t_{0} = 0 $ and $t_{f} =1$ from now on.
In \cite{NielGeo}\cite{brandt}, techniques are developed for the computation of geodesics $\gamma(s)$ which optimize the integral in the definition of $\mathscr{D}_{\Omega}(\cdot, \cdot)$. We will generally not be concerned with the deduction of such $\gamma(s)$; rather, we will assume the existence of such $\gamma(s)$ and examine properties of such geodesics. Given an element $\mathbf{\hat{U}}(t)\in SU(N)$ (dependent on a parameter $t\geq 0$, the intrinsic metric $\mathscr{D}_{\Omega}(\cdot, \cdot)$ has been used in \cite{NielGeo} \cite{brandt} to define the so-called geometric quantum computational complexity of the unitary operator $\mathbf{\hat{U}}(t)$, in \cite{suss1} such a measure of complexity is simply referred to as the Geometric Complexity. We formalize this notion as a definition below.
\begin{definition} [\textbf{Geometric Complexity of a} $\mathbf{\hat{U}}(t)\in SU(N)$ ] 
Let $\mathbf{\hat{U}}\in SU(N)$, equipping $SU(N)$ with the Carnot-Carath\'eodory metric (\ref{eqn:cara}) for fixed $\Omega$, we define the Geometric Complexity of $\mathbf{\hat{U}}$, denominated $\mathscr{G}_{\Omega}(\mathbf{\hat{U}})$, as follows
\begin{equation}
\mathscr{G}_{\Omega}(\mathbf{\hat{U}}):= \mathscr{D}_{\Omega}(\mathbb{I}, \mathbf{\hat{U}}) = \inf_{\gamma}\int_{0}^{1}\sqrt{\langle\gamma^{'}(s),\gamma^{'}(s)\rangle_{\Omega}}ds 
\end{equation}
\end{definition}
In the context of quantum computation, this notion of complexity strives to supplant the gate/depth notions of computational complexity where a universal set of quantum gates is presupposed to constitute enough expressibility to effectuate all possible gates, and any quantum computation is carried out as a sequential implementation of individual elements of these universal gates \cite{Niel}; the complexity in such cases is related to the number of gates necessary to achieve the desired quantum computational task given a tolerance of error. The notion of Geometric Complexity replaces the concept of a universal set of quantum gates by introducing a generally non-homogeneous right-invariant Riemannian metric $g_{\Omega}$ as defined above. Thenon-homogeneous nature of said metric restricts the paths along which an initial element $\mathbf{\hat{U}} \in SU(N)$ may evolve when connecting from $\mathbf{\hat{U}}$ to some desired final $\mathbf{\hat{V}}\in SU(N)$. 

Let $\mathbf{\hat{U}}$ be a unitary operator, and let us assume that the curve $\gamma(s)$ is a parametrized geodesic curve minimizing the intrinsic distance $\mathscr{D}_{\Omega} (\mathbb{I},\mathbf{\hat{U}})$ (i.e. a geodesic). For all $s\in[0,1]$, the derivative $\gamma^{'}(s) = \mathbf{\hat{H}}(s)\gamma(s)$, where $\mathbf{\hat{H}}(s)$ is an element of $TSU_{\gamma(s)}(N)=\mathfrak{su}(N)$; i.e. the optimal direction of travel at point $\gamma(s)$ along the trajectory. Indeed, the non-homogeneity of the metric $g_{\Omega}$ structures the elements of the tangent space $\mathbf{\hat{H}}(s)$ for each $\gamma(s)$. Assuming we have deduced the family of tangential elements $\mathbf{\hat{H}}(s)$ characterizing the optimal curve $\gamma(s)$, we can naturally reinterpret the $\gamma(s)$ as the solution to the time-dependent Schr\"{o}dinger equation $\gamma^{'}(s)= -i\mathbf{\hat{H}}(s)\gamma(s)$, the solution to said ODE being the following. 
\begin{equation}
\label{eqn:trajj}
\gamma(s) = \hat{\mathscr{T}}e^{-i\int_{0}^{1}\mathbf{\hat{H}}(s)ds}
\end{equation}
With $\gamma(s)$ the optimizer of (\ref{eqn:cara}), and $\mathscr{\hat{T}}$ the time-ordering operator. The corresponding Geometric complexity $\mathscr{G}_{\Omega}(\mathbf{\hat{U}})$ takes the following form. 
\begin{equation}
\mathscr{G}_{\Omega}(\mathbf{\hat{U}})=\frac{1}{\sqrt{N^{2}-1}}\int_{0}^{1}\sqrt{\langle\mathbf{\hat{H}}(s), \mathbf{\hat{H}}(s)\rangle_{\Omega}}ds
\end{equation}
\subsection{Cohering Power and other measures related to the geometric complexity}
\;\;\; There exists a slew of quantum resources, e.g. quantum coherence, entanglement, circuit magic and circuit sensitivity, just to name a few. Given a quantum circuit characterized by a unitary matrix $\mathbf{\hat{U}}$, the corresponding cohering power (magic power, circuit sensitivity, etc) may be defined as optimization functionals mapping unitary matrices to positive real numbers $\max_{x\in D}F(\mathbf{\hat{U}},x)$. Here the optimization is written in a rather abstract form, but as we shall see, the space $D$ will tend to be the space of density matrices. In the cases of circuit sensitivity, cohering power, and magic power, optimization functionals may be defined whose rate of change is governed by the generator of $\mathbf{\hat{U}}$ \cite{Bu}; this ultimately leads to lower bounds on the geometric complexity of some $\mathbf{\hat{U}}\in SU(N)$ which imply rigorously that quantum resources are inherently related to the geometric complexity of the corresponding quantum circuit. For completeness, we present below the main result of \cite{Bu}. 
\begin{definition4}{Lower bounds on circuit complexity} The circuit cost of a quantum circuit $\mathbf{\hat{U}}\in SU(2^{n})$ is lower bounded as follows:
\begin{equation}
Cost(\mathbf{\hat{U}})
 \geq c \max\Big\{CiS(\mathbf{\hat{U}}), \frac{\mathscr{M}(\mathbf{\hat{U}})}{4}, \frac{\mathscr{C}_{r}(\mathbf{\hat{U}})}{log(2)}\Big\}
\end{equation}
where $c$ is a universal constant independent of $n$. The quantities $CiS(\mathbf{\hat{U}})$, $ \mathscr{M}(\mathbf{\hat{U}})$ and $\mathscr{C}_{r}(\mathbf{\hat{U}})$ quantify the sensitivity, magic and coherence respectively, see \cite{Bu}.
\end{definition4}
In \cite{Bu} rather than use $\mathscr{G}_{\Omega}(\mathbf{\hat{U}})$ they define the geometric complexity as follows (referring to it as the cost)
\begin{equation}
Cost(\mathbf{\hat{U}}):=\inf_{\gamma(s)}\int_{0}^{1}\sum_{j=1}^{m}|r_{j}(s)|ds
\end{equation}
where the $r_{j}(s)$ are the coefficients if the expansion of $\mathbf{\hat{H}}(s)$ or time $s$; where the $\mathbf{\hat{H}}(s) $ constitute a trajectory of the likes of (\ref{eqn:trajj}). With the latter in mind, let us now relate our definition of geometric complexity to the definition of cost in \cite{Bu}. 
\begin{equation}
Cost(\mathbf{\hat{U}}):=\inf_{\gamma(s)}\int_{0}^{1}\sum_{j=1}^{m}|r_{j}(s)|ds \leq
\end{equation}
\begin{equation}
\inf_{\gamma(s)}\int_{0}^{1}m(\gamma(s))\sqrt{\sum_{j=1}^{m(\gamma(s))}|r_{j}(s)|^{2}}ds = 
\end{equation}
\begin{equation}
\inf_{\gamma(s)}\int_{0}^{1}m(\gamma(s))\sqrt{\langle \mathbf{\hat{H}}(s), \mathbf{\hat{H}}(s)\rangle_{\Omega}}ds \leq  
\end{equation}
\begin{equation}
(N^{2}-1)\mathscr{G}_{\Omega}(\mathbf{\hat{U}}) < N^{2}\mathscr{G}_{\Omega}(\mathbf{\hat{U}})
\end{equation}
The latter yields the following corollary. 
\begin{Co}{Lower bounds on geoemtric complexity part 2}
\begin{equation}
\mathscr{G}_{\Omega}(\mathbf{\hat{U}})
 \geq c \max\Big\{\frac{CiS(\mathbf{\hat{U}})}{4^n}, \frac{\mathscr{M}(\mathbf{\hat{U}})}{4^{n}}, \frac{\mathscr{C}_{r}(\mathbf{\hat{U}})}{log(2)4^n}\Big\}
\end{equation}
where $c$ is a universal constant independent of $n$.
\end{Co}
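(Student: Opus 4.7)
The plan is to combine two ingredients that are already on the page. The first is the chain of inequalities immediately preceding the corollary, whose endpoint is the upper bound
\[
Cost(\mathbf{\hat{U}}) \;<\; N^{2}\,\mathscr{G}_{\Omega}(\mathbf{\hat{U}}).
\]
The second is the theorem ``Lower bounds on circuit complexity'' just cited, which furnishes the lower bound
\[
Cost(\mathbf{\hat{U}}) \;\geq\; c\,\max\Big\{CiS(\mathbf{\hat{U}}),\,\tfrac{\mathscr{M}(\mathbf{\hat{U}})}{4},\,\tfrac{\mathscr{C}_{r}(\mathbf{\hat{U}})}{\log(2)}\Big\}
\]
with a universal constant $c$ independent of the qubit number $n$.

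First I would specialize $N=2^{n}$, so that $N^{2}=4^{n}$, and rearrange the upper bound into the inverse form
\[
\mathscr{G}_{\Omega}(\mathbf{\hat{U}}) \;>\; \frac{Cost(\mathbf{\hat{U}})}{4^{n}}.
\]
Second, I would substitute the theorem's lower bound on $Cost(\mathbf{\hat{U}})$ into the right-hand side, obtaining
\[
\mathscr{G}_{\Omega}(\mathbf{\hat{U}}) \;>\; \frac{c}{4^{n}}\,\max\Big\{CiS(\mathbf{\hat{U}}),\,\tfrac{\mathscr{M}(\mathbf{\hat{U}})}{4},\,\tfrac{\mathscr{C}_{r}(\mathbf{\hat{U}})}{\log(2)}\Big\}.
\]
Finally, I would distribute the factor $4^{-n}$ into each of the three arguments of the maximum and relax the strict inequality to the non-strict one, yielding the displayed form of the corollary. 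Because the constant $c$ is inherited verbatim from the theorem, it remains universal and independent of $n$.

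I do not anticipate any substantive obstacle: the argument is purely an algebraic rearrangement whose content is entirely contained in the preceding theorem and in the chain of inequalities derived from the Cauchy--Schwarz-type estimate $\sum_{j}|r_{j}| \leq \sqrt{m}\sqrt{\sum_{j}|r_{j}|^{2}}$ together with the crude bound $m(\gamma(s)) \leq N^{2}-1 < N^{2}$. The only real care needed is bookkeeping, namely keeping track of where the dimensional prefactor $N^{2}=4^{n}$ ends up once it is pushed inside the maximum --- in particular checking whether the magic term reads $\mathscr{M}(\mathbf{\hat{U}})/4^{n}$ (the stated form) or the slightly stronger $\mathscr{M}(\mathbf{\hat{U}})/(4\cdot 4^{n})$ after the distribution. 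Either way, the resulting inequality is implied by the derivation and no additional idea is required beyond composing the two existing estimates.
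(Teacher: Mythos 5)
Your proposal matches the paper's own route exactly: the corollary is obtained by composing the theorem's lower bound on $Cost(\mathbf{\hat{U}})$ with the chain $Cost(\mathbf{\hat{U}}) < N^{2}\mathscr{G}_{\Omega}(\mathbf{\hat{U}})$ derived just above it, with $N=2^{n}$. Your bookkeeping remark is also apt --- a verbatim distribution of $4^{-n}$ gives $\mathscr{M}(\mathbf{\hat{U}})/(4\cdot 4^{n})$, so the stated form with $\mathscr{M}(\mathbf{\hat{U}})/4^{n}$ only holds after absorbing that factor of $4$ into the universal constant $c$, which is harmless.
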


We will now center our attention on the notion of cohering power. We will now diverge from the calcuations of \cite{Bu} and derive our own lower bound for geometric complexity from coherence power using a different approach to the one implemented in \cite{Bu}. To formally define the notion of cohering power, we first introduce the concept of relative entropy of coherence with respect to the purity/linear entropy. In \cite{Bu} the von Neumann entropy is used, here we opt out for the simpler linear entropy. 
\begin{definition}[\textbf{Relative entropy of coherence}]
Let $\mathscr{E}(\cdot):\sum_{i}\mathbf{\hat{P}}_{i}(\cdot)\mathbf{\hat{P}}_{i}$ be a completely-dephasing channel, where $\{\mathbf{\hat{P}}_{i}\}_{i}$ is an orthogonal set of projectors that spans $\mathbb{C}^{N}$. Now, let $\boldsymbol{\hat{\rho}}\in S(\mathbb{C}^{N})$ ( the space of density operators acting in $\mathbb{C}^{N}$). The Relative Entropy of Coherence is defined as follows. 

\begin{equation}
C_{\mathscr{E}}\big( \boldsymbol{\hat{\rho}}\big):= \zeta(\boldsymbol{\hat{\rho}})-\zeta(\mathscr{E}(\boldsymbol{\hat{\rho}}))
\end{equation}
where 
\begin{equation}
\zeta(\mathbf{\hat{A}}) := Tr\big\{\mathbf{\hat{A}}^{2}\big\}
\end{equation}
known as the purity of a quantum state. 
\end{definition} 
Notice that $C_{\mathscr{E}}$ may be defined equivalently in terms of the linear entropy $S_{L}\big(\mathbf{\hat{\rho}}\big):=1-\zeta(\mathbf{\hat{\rho}})$ with respect to the completely-dephasing map $\mathscr{E}$ as follows. 
\begin{equation}
C_{\mathscr{E}}\big( \boldsymbol{\hat{\rho}}\big):= S_{L}\big(\mathscr{E}\big(\boldsymbol{\hat{\rho}}\big)\big)-S_{L}\big(\boldsymbol{\hat{\rho}}\big)
\end{equation}
As mentioned already, in \cite{Bu} the von Neumann entropy has been used in lieu of the linear entropy which we employ here. We have chosen to utilize the linear entropy/ purity because it is simpler to manipulate, and it serves one of our ultimate goals in this work, which is to bound geometric complexity for quantum computation and to relate it to show the relationship between geometric complexity and quantum resources such as coherence. To proceed, we define our version of the notion of cohering power, for a given unitary operator $\mathbf{\hat{U}}\in SU(N)$, introduced in \cite{Bu}. 
\begin{definition}[\textbf{Cohering Power of a unitary evolution} $\mathbf{\hat{U}}$]
\label{eqn:coheringpower}
We fix a completely-dephasing channel $\mathscr{E}$, as in the definition of Relative entropy coherence. Here, the Cohering Power of $\mathbf{\hat{U}}$ with respect to $\mathscr{E}$ 
 for a fixed $t\geq 0$ is defined as follows. 
\begin{equation}
\label{eqn:decpower}
\mathscr{C}_{\mathscr{E}}(\mathbf{\hat{U}}):=\max_{\boldsymbol{\hat{\rho}}\in S(\mathbb{C}^{N})}\big|C_{\mathscr{E}}(\mathbf{\hat{U}}\boldsymbol{\hat{\rho}}\mathbf{\hat{U}}^{\dagger})-C_{\mathscr{E}}(\boldsymbol{\hat{\rho}})\big|
\end{equation}
where the maximum is taken over all density operators acting in $\mathbb{C}^{N}$, which we denote by $S(\mathbb{C})$; for $n$ qubits $N= 2^{n}$.
\end{definition}
Furthermore, we introduce the related notion of "Rate of Coherence". Our treatment of this quantity deviates from the approach found in \cite{Bu}.
\begin{definition}[\textbf{Rate of Coherence}]
Given an s-dependent Hermitian matrix $\mathbf{\hat{H}}(ts)$ in the span of $\mathfrak{su}(N)$, define the rate of coherence of the unitary evolution $\mathbf{\hat{U}}_{t}:=e^{-i\int_{0}^{1}\mathbf{\hat{H}}(ts)ds} = e^{-i\int_{0}^{t}\mathbf{\hat{H}}(s)ds}  $ with respect to the state $\boldsymbol{\hat{\rho}}$ and the completely-dephasing map $\mathscr{E}$ at $t=t^{'}$ as follows. 

\begin{equation}
R_{\mathscr{E}}(\mathbf{\hat{H}}(t),\boldsymbol{\hat{\rho}})\Big|_{t=t^{'}}:= \frac{d}{dt}C_{\mathscr{E}}\big(\mathbf{\hat{U}}_{t}\boldsymbol{\hat{\rho}}\mathbf{\hat{U}}^{\dagger}_{t}\big)\Big|_{t=t^\prime}
\end{equation}
\end{definition}

\begin{definition9}[\textbf{Exact value of the rate of change and a bound}]
\label{eqn:ratethe}
Let us first make the following two definitions.
\begin{equation}
\mathbf{\hat{U}}_{t}:=e^{-i\int_{0}^{t}\mathbf{\hat{H}}(s)ds}
\end{equation}
and 
\begin{equation}
\boldsymbol{\hat{\rho}}_{t}:=\mathbf{\hat{U}}_{t}\boldsymbol{\hat{\rho}}\mathbf{\hat{U}}^{\dagger}_{t}
\end{equation}
\label{eqn:prop1}
Then,
\begin{equation}
R_{\mathscr{E}}(\mathbf{\hat{H}}(t),\boldsymbol{\hat{\rho}})\Big|_{t=t^\prime} = 2iTr\big\{\big[\boldsymbol{\hat{\rho}}_{t^\prime},\mathscr{E}\big(\boldsymbol{\hat{\rho}}_{t^{\prime}}\big)\big]\mathbf{\hat{H}}(t^\prime)\big\}
\end{equation}
and
\begin{equation}
\big|R_{\mathscr{E}}(\mathbf{\hat{H}}(t),\boldsymbol{\hat{\rho}})\big|\Big|_{t=t^\prime}\leq A_{\mathscr{E},\boldsymbol{\hat{\rho}}}\big\|\mathbf{\hat{H}}(t^\prime)\big\|_{H.S.}
\end{equation}
where $A_{\mathscr{E},\boldsymbol{\hat{\rho}}} := \sup_{t^\prime\in[0,t]}\big\|\big[\boldsymbol{\hat{\rho}}_{t^\prime},\mathscr{E}\big(\boldsymbol{\hat{\rho}}_{t^{\prime}}\big)\big]\big\|_{H.S.}\leq \sqrt{2}$
\end{definition9}
\begin{proof}
See Appendix \ref{eqn:ratethemdem}
\end{proof}

Once again letting $\mathbf{\hat{U}}_{t}:=e^{-i\int_{0}^{t}\mathbf{\hat{H}}(s)ds}$, we can use the fundamental theorem of calculus to relate the cohering power to the rate of coherence as follows. 
\begin{equation}
|C_{\mathscr{E}}(\mathbf{\hat{U}}_{t}\boldsymbol{\hat{\rho}}\mathbf{\hat{U}}^{\dagger}_{t})-C_{\mathscr{E}}(\boldsymbol{\hat{\rho}})| = 
\end{equation}
\begin{equation}
\Big|\int_{0}^{t} \frac{d}{ds}C_{\mathscr{E}}\big(\mathbf{\hat{U}}_{s}\boldsymbol{\hat{\rho}}\mathbf{\hat{U}}^{\dagger}_{s}\big)\Big|_{s=t^\prime}dt^\prime\Big| = 
\end{equation}
\begin{equation}
\Bigg|\int_{0}^{t}R_{\mathscr{E}}(\mathbf{\hat{H}}(t^\prime),\boldsymbol{\hat{\rho}})dt^\prime\Bigg| 
\end{equation}
where $\boldsymbol{\hat{\rho}}$ is taken to be the the maximizor of (\ref{eqn:decpower}) for time $t$.
The latter leads us to the following interesting result relating Geometric Complexity to Decoherence power.
\begin{definition4}[\textbf{Decohering power bounds the geometric complexity from below}]
\label{eqn:devoheringpowerbound}
\begin{equation}
\frac{1}{\sqrt{2N}}\mathscr{C}_{\mathscr{E}}(\mathbf{\hat{U}}_{t})\leq \mathscr{G}_{\Omega}(\mathbf{\hat{U}}_{t})
\end{equation}
\end{definition4}
\begin{proof}
 \begin{equation}
 \label{eqn:nl1}
\mathscr{C}_{\mathscr{E}}(\mathbf{\hat{U}}_{t})=\Bigg|\int_{0}^{t}R_{\mathscr{E}}(\mathbf{\hat{H}}(s),\boldsymbol{\hat{\rho}})ds\Bigg| \leq \int_{0}^{t}\Big|R_{\mathscr{E}}(\mathbf{\hat{H}}(s),\boldsymbol{\hat{\rho}})\Big|ds \leq
\end{equation}
\begin{equation}
\label{eqn:nl2}
\int_{0}^{t}\sqrt{2}\|\mathbf{\hat{H}}(s)\|_{H.S.}ds \leq
\int_{0}^{t}\sqrt{2}\sqrt{Tr\big\{\mathbf{\hat{H}}(s)\Omega\mathbf{\hat{H}}(s)\big\}}ds =
\end{equation}
\begin{equation}
\int_{0}^{t}\sqrt{2}\frac{\sqrt{N^{2}-1}}{\sqrt{N^{2}-1}}\sqrt{Tr\big\{\mathbf{\hat{H}}(s)\Omega\mathbf{\hat{H}}(s)\big\}}ds = 
\end{equation}
\begin{equation}
\sqrt{2(N^{2}-1)}\int_{0}^{t}\sqrt{\langle \mathbf{\hat{H}}(s), \mathbf{\hat{H}}(s)\rangle_{\Omega}}ds = \sqrt{2}N\mathscr{G}_{\Omega}(\mathbf{\hat{U}}_{t}) 
\end{equation}

where we have used Proposition \ref{eqn:prop1} in going from the end of (\ref{eqn:nl1}) to the beginning of (\ref{eqn:nl2}).
\end{proof}
This result differs from the analogous result in \cite{Bu} in that here we have included the information from the underlying metric $\langle\cdot, \cdot\rangle_{\Omega}$ and there is an overall $\frac{1}{N^{2}-1}$ difference in the lower bound. 

\section{Complexity in Open Quantum Systems}

\;\;\;So far we have studied geometric complexity in relation to unitary dynamics characterizing quantum computation. We now pass on to a generalization of geometric complexity for quantum computation for the case where the interaction between the quantum computer and its environment plays a non-negligible role. This puts us within the domain of open quantum systems; the natural objects of study in this field are quantum channels \cite{Niel}, of which the Markovian type are the most popular example \cite{Schloss}. Informally, a quantum channel $\Lambda(\cdot):\mathcal{S}(\mathscr{H})\rightarrow \mathcal{S}(\mathscr{H})$ is a completely positive map that takes density operators to density operators; here $\mathcal{S}(\mathscr{H})$ represents the space of density operators acting in the Hilbert space $\mathscr{H}$. We must emphasize the use of density operators here since in general, the outputs of quantum channels need not be pure states. This indeed means that the dynamics afforded by the given quantum channel $\Lambda(\cdot)$ will not be unitary. This causes a paradigm shift; where before, with unitary channels, we could study the dynamics on a quantum state as an element of a Lie Group, for the case of non-unitary dynamics, this is no longer the case as the set of quantum channels for a fixed Hilbert space does not constitute a Lie Group. In the best of cases, we can obtain a so-called Lie Semigroup which does not have a smooth differential structure like Lie groups do. An example of such a case are the well-studied Markovian open quantum systems, where we focus on one-parameter semigroups.  

Drawing parallels between the Lie Group description of unitary dynamics and non-unitary dynamics, in a geometric sense, is not trivial since the mathematical structures underlying these so-called Lie semigroups are very different, invoking concepts such as Lie wedge and tangent cone, see \cite{cone}. The power of Lie groups is that they are smooth manifolds and as such may be equipped with a Riemannian metric such as the Hermitian forms $\langle \cdot, \cdot\rangle_{\Omega}$ studied in the previous section. Doing something that parallels the latter would require one to first work within constraints that conduce to a smooth manifold encompassing interesting non-unitary (in some sense) dynamics and equipping said set with a Riemannian metric. To the authors, the latter remains nebulous, and we will therefore look for alternatives. One way to avoid this conundrum is to focus on the geometry of the states rather than that of the unitary operators describing the dynamics. The sets of pure states and mixed states can both be equipped with the Bures/Fubini-Study Riemannian metric, which then grounds us in the realm of Riemannian geometry. We will say more on this later. Another approach, and the one that will be the focus of this work, is to present a way to connect from the non-unitary channel case to the unitary case via a particular purification scheme that is more amenable to the techniques already in play involving Riemannian geometry on Lie Groups. In the following, we will present a family of quantum channels that are generic enough to describe the canonical models of decoherence and dissipation. We will use such channels to model noise effectuated by the environment $E$ onto the system $S$ (The quantum computer). 
\section{Non-unitary quantum channels in Open Quantum Systems}
\;\;\; We will focus on non-unitary channels $\Lambda_{t}(\cdot)$ that may be generated by fixing a Hilbert space $\mathscr{H}= \mathscr{H}_{S}\otimes\mathscr{H}_{E}$, letting $\sum_{i}p_{i}|E_{i}\rangle\langle E_{i}| \in \mathcal{S}(\mathcal{\mathscr{H}}_{E})$ be the reduced state of $E$ at $t=0$ ( we have the freedom to represent the state of the environment in whatever basis we want), and $\boldsymbol{\hat{\rho}}_{S} \in \mathcal{S}(\mathcal{\mathscr{H}}_{S})$. Next, consider the time-independent total Hamiltonian $\boldsymbol{\hat{H}}_{tot} = \boldsymbol{\hat{H}}_{S}+\boldsymbol{\hat{H}}_{I}+\boldsymbol{\hat{H}}_{E}$
where $\boldsymbol{\hat{H}}_{S}$, $\boldsymbol{\hat{H}}_{I}$ and $\boldsymbol{\hat{H}}_{E}$ are respectively the Hamiltonians pertaining to the self-dynamics of the system, the interactive dynamics and the self-dynamics of the environment; indeed, $\mathbf{\hat{H}}_{S}$ acts non-trivially only in $\mathscr{H}_{S}$, $\mathbf{\hat{H}}_{E}$ acts non-trivially only in $\mathscr{H}_{E}$, and $\mathbf{\hat{H}}_{I}$ acts non-trivially in the total space $\mathscr{H}_{S}\otimes \mathscr{H}_{E}$. Starting with some initial separable state $\boldsymbol{\hat{\rho}}_{0} = \boldsymbol{\hat{\rho}}_{S}\otimes\sum_{i}p_{i}|E_{i}\rangle\langle E_{i}|$, we may obtain the local non-unitary dynamics of the system $S$ by taking the partial trace over the environmental degrees pf freedom as follows. 
\begin{equation}
\label{eqn:nonuni2}
\Lambda_{t}(\boldsymbol{\hat{\rho}}_{S}):= 
\end{equation}
\begin{equation}
Tr_{E}\Big\{e^{-it\boldsymbol{\hat{H}}_{tot}}\Big(\boldsymbol{\hat{\rho}}_{S}\otimes\sum_{i}p_{i}|E_{i}\rangle\langle E_{i}|\Big)e^{it\boldsymbol{\hat{H}}_{tot}}  \Big\} = 
\end{equation}
\begin{equation}
=\sum_{ij}p_{i}\langle E_{j}|e^{it\boldsymbol{\hat{H}}_{tot}}|E_{i}\rangle\boldsymbol{\hat{\rho}}_{S}\langle E_{i}|e^{-it\boldsymbol{\hat{H}}_{tot}}|E_{j}\rangle\end{equation}
The operators $\boldsymbol{\hat{M}}_{ji}(t):= \sqrt{p_{i}}\langle E_{j}|e^{it\boldsymbol{\hat{H}}_{tot}}|E_{i}\rangle$ are a case of what are known as Kraus operators \cite{Niel}; the key features of said operators, for the case of quantum channels, is their role in the following resolution of the identity for every $t\geq 0$
\begin{equation}
\sum_{ji}\boldsymbol{\hat{M}}_{ji}^{\dagger}(t)\boldsymbol{\hat{M}}_{ji}(t)= \mathbb{I}_{S}  
\end{equation}
whence we rewrite 
\begin{equation}
\label{eqn:nonuni}
\Lambda_{t}(\boldsymbol{\hat{\rho}}_{S})=\sum_{ij}\boldsymbol{\hat{M}}_{ji}(t)\boldsymbol{\hat{\rho}}_{S}\boldsymbol{\hat{M}}_{ji}^{\dagger}(t).
\end{equation}
Although the quantum channel $\Lambda_{t}(\cdot)$ is defined in a rather generic way, the assumptions leading to the non-unitary dynamics in (\ref{eqn:nonuni}) are nevertheless not the most generic. Further levels of generality would involve relaxing the assumption that the initial state $\boldsymbol{\hat{\rho}}_{tot}$ is a product state. There is not much to gain conceptually from such a generalization since one could pass from a setting where the initial state $\boldsymbol{\hat{\rho}}_{tot}$ is a product state to the case where we consider a linear combination of product states evolving non-unitarily via the linearity of the quantum channel $\Lambda_{t}(\cdot)$ defined above. As a consequence, we henceforth only focus on initial $\boldsymbol{\hat{\rho}}_{tot}$, which are product states. The second level of generality is the possibility of $\boldsymbol{\hat{H}}_{tot}$ to be time-dependent. Here, the shift from a time-independent total Hamiltonian to one that is time-dependent will not be so trivial and will require special attention, leading to some more interesting results. We will hence come back to these later sections; for the sake of motivating the general ideas of this work, we will focus on the time-independent $\mathbf{\hat{H}}_{tot}$ for the moment.

Let us now ask the following questions. "What is the geometric complexity of the time-dependent quantum channel defined by $\Lambda_{t}(\cdot)$"?. "What could such a question mean in physical terms"? In Theorem \ref{eqn:devoheringpowerbound}, we showed that the geometric complexity is bounded from below by a term proportional to the cohering power (Definition \ref{eqn:coheringpower}) of said unitary evolution; hence establishing a relationship between the "complexity" of states and the complexity of the dynamics. Namely, it quantified the geometric complexity of a unitary dynamics via its capacity to generate complex states, where the level of complexity of a quantum state, here may be measured in terms of the von Neuman entropy for which a fully decohered quantum state has max entropy and a pure state (no decoherence) has minimum entropy. We may therefore generalize this physical motivation in terms of unitary dynamics to the case of non-unitary dynamics, i.e., how capable is the quantum channel $\Lambda_{t}(\cdot)$ of generating complex states in the same sense as before. Owing to the fact that non-unitary maps are entropy increasing, we would expect whatever measure of geometric complexity is defined for general quantum channels to recover the usual geometric complexity values for the unitary case while proving inferior measures for the non-unitary case. 

\section{Geometric complexity of a quantum channel}
\;\;\; To define a geometric complexity for quantum channels let us first assume that $\Lambda_{t}(\cdot)$ is defined as in (\ref{eqn:nonuni}). In general, there is not a unique larger Hilbert space, initial state $\sum_{i}p_{i}|E_{i}\rangle\langle E_{i}|$, and unitary dynamics $e^{-it\boldsymbol{\hat{H}}_{tot}}$ yielding the map $\Lambda_{t}(\cdot)$. Nevertheless, in physical settings one knows the initial conditions leading to $\Lambda_{t}(\cdot)$ and so we may simply identify this non-unitary map with $\boldsymbol{\hat{U}}_{t}(\cdot):=e^{-it\boldsymbol{\hat{H}}_{tot}}(\cdot)e^{it\boldsymbol{\hat{H}}_{tot}}$; from now on we will always assume that such initial conditions leading to the map $\Lambda_{t}(\cdot)$ are known and unambiguous. This brings us back to the setting of unitary dynamics. We may therefore compute the geometric complexity of such a dynamics by just considering the unitary operator $e^{-it\boldsymbol{\hat{H}}_{tot}}$ and using the definition of geometric complexity from the previous sections. Since we are only interested in time-independent total Hamiltonians $\mathbf{\hat{H}}_{tot}$, this forces the case where $\Omega$ is just the identity matrix; i.e. all $l_{i} =1$. This is due to the fact that time-independent Hamiltonians are generators of one-parameter unitary groups which are a parametrization of a geodesic on $SU(N)$ with the flat metric, which we will label with an index $hs$ for Hilbert-Schmidt, i.e. the Hilbert-Schmidt metric. Now, let $d:= dim(\mathscr{H}_{S}\otimes\mathscr{H}_{E})$, then
\begin{equation}
\mathscr{G}_{hs}\big(e^{-it\boldsymbol{\hat{H}}_{tot}}\big)= 
\end{equation}
\begin{equation}
\frac{1}{\sqrt{d^{2}-1}}\int_{0}^{t}\sqrt{\langle\boldsymbol{\hat{H}}_{tot}, \boldsymbol{\hat{H}}_{tot}\rangle_{hs}}dt^{\prime} =
\end{equation}
\begin{equation}
\frac{1}{\sqrt{d^{2}-1}}\int_{0}^{t}\big\|\boldsymbol{\hat{H}}_{tot}\big\|_{H.S.}dt^{\prime}= \frac{t}{\sqrt{d^{2}-1}}\big\|\boldsymbol{\hat{H}}_{tot}\big\|_{H.S.}
\end{equation}
However, this gives us way more information than we desire. What we would like is to distill the complexity associated with the dynamics of the system $S$, and so a new definition of geometric complexity is necessary. Before producing such a novel definition, note that here $\|\mathbf{\hat{H}}_{tot}\|_{H.S.} = \||\mathbf{\hat{H}}_{tot}|\|_{H.S.}$, which implies that $\mathscr{G}_{hs}\big(e^{-it\boldsymbol{\hat{H}}_{tot}}\big) = \mathscr{G}_{hs}\big(e^{-it|\boldsymbol{\hat{H}}_{tot}|}\big)$; we state this as a formal remark. 
\begin{definition5}
\label{eqn:remark1}
Consider an arbitrary Hamiltonian $\mathbf{\hat{H}}$, then 
\begin{equation}
\mathscr{G}_{hs}\big(e^{-it\boldsymbol{\hat{H}}}\big) = \mathscr{G}_{hs}\big(e^{-it|\boldsymbol{\hat{H}}|}\big)
\end{equation}
\end{definition5}
Furthermore, the following is also easy to show.
\begin{definition5}
\label{eqn:remark2}
Consider arbitrary Hamiltonians $\mathbf{\hat{H}}_{1}$ and $\mathbf{\hat{H}}_{2}$, then 
\begin{equation}
\mathscr{G}_{hs}\big(e^{-it(\boldsymbol{\hat{H}}_{1}+\mathbf{\hat{H}}_{2})}\big) \leq \mathscr{G}_{hs}\big(e^{-it\boldsymbol{\hat{H}}_{1}}\big)+\mathscr{G}_{hs}(e^{-it\boldsymbol{\hat{H}}_{2}}\big)
\end{equation}
\end{definition5}
With the latter in mind, we present the following generalization of geometric complexity. As mentioned before, we begin with the case of dynamics generated by a one-parameter group and then move on to the more general case in later sections. 
\begin{definition}[Geometric complexity for Quantum Channels generated by one parameter groups, Channel Complexity for short ]
\label{eqn:orbitcomplexity}
Let $\Lambda_{t}$ (we will drop parentheses $(\cdot)$ for clarity), be defined as in (\ref{eqn:nonuni}), characterized by the one-parameter unitary group $e^{-it\boldsymbol{\hat{H}}_{tot}}$. We define a geometric complexity measure, which we will denominate as the Channel-Complexity, as follows. 
\begin{equation}
\mathcal{G}_{hs}\big(\Lambda_{t}\big):=\mathscr{G}_{hs}\big(e^{-it\boldsymbol{\hat{H}}_{tot}}\big) - 
\mathscr{G}_{hs}(e^{-it\big(\sqrt{|\mathbf{\hat{H}}_{tot}^{2}-\mathbf{\hat{H}}_{S}^{2}|}\big)})
\end{equation}
where $\mathscr{G}_{hs}\big(e^{-it\boldsymbol{\hat{H}}_{tot}}\big)$ is the usual geometric complexity defined in (\ref{eqn:cara}). 
\end{definition}
At first glance, it might be evident that this definition's effect is to eliminate information regarding the dynamics that is only detectable in the environmental degrees of freedom. To convince the reader that this is indeed what is achieved in some sense, we will provide evidence that our measure is sensible for the detection of "complexity-loss" due to non-unitary channels. 

To showcase teh utility of Definition \ref{eqn:orbitcomplexity}, we first consider the case where $\mathbf{\hat{H}}_{I} = 0 = \mathbf{\hat{H}}_{E}$. Here, $\mathcal{G}_{hs}(e^{-it\mathbf{\hat{H}_{S}}}) = \mathscr{G}_{hs}(e^{-it\mathbf{\hat{H}_{S}}}) -\mathscr{G}_{hs}(\mathbb{I}) =  \mathscr{G}_{hs}(e^{-it\mathbf{\hat{H}_{S}}})  $; i.e. the channel-complexity and the geometric complexity coincide in the limit where there is no environment! Before proceeding, we will present the following two propositions.
\begin{definition9}
\label{eqn:prop}
Let $\mathbf{\hat{H}}_{tot}=\mathbf{\hat{H}}_{S}+\mathbf{\hat{H}}_{I}+\mathbf{\hat{H}}_{E}$ Hermitian matrix acting on a tensor product Hilbert space $\mathscr{H}_{S}\otimes\mathscr{H}_{E}$, then following holds.
\begin{equation}
\mathscr{G}_{hs}(e^{-it\mathbf{\hat{H}}_{tot}}) \leq 
\mathscr{G}_{hs}(e^{-it\mathbf{\hat{H}}_{S}}) + \mathscr{G}_{hs}\big(e^{-it(\sqrt{\mathbf{\hat{H}}^{2}_{tot}-\mathbf{\hat{H}}_{S}^{2}})})  \end{equation}
\end{definition9}
\begin{proof}
\begin{equation}
\mathscr{G}_{hs}(e^{-it\mathbf{\hat{H}}_{tot}}) = \frac{t}{\sqrt{d^{2}-1}}\big\|\mathbf{\hat{H}}_{tot}\big\|_{H.S.} =
\end{equation}
\begin{equation}
\frac{t}{\sqrt{d^{2}-1}}\sqrt{|Tr\big\{\mathbf{\hat{H}}_{tot}^{2}-\mathbf{\hat{H}}_{S}^{2}+\mathbf{\hat{H}}_{S}^{2}\big\}|} \leq
\end{equation}
\begin{equation}
\frac{t}{\sqrt{d^{2}-1}}\sqrt{|Tr\big\{\mathbf{\hat{H}}_{tot}^{2}-\mathbf{\hat{H}}_{S}\big\}|+|Tr\big\{\mathbf{\hat{H}}_{S}^{2}\big\}|} \leq
\end{equation}
\begin{equation}
\frac{t}{\sqrt{d^{2}-1}}\sqrt{Tr\big\{|\mathbf{\hat{H}}_{tot}^{2}-\mathbf{\hat{H}}_{S}|\big\}|+Tr\big\{\mathbf{\hat{H}}_{S}^{2}\big\}} \leq
\end{equation}
\begin{equation}
\frac{t}{\sqrt{d^{2}-1}}\Bigg(\Big\|\sqrt{|\mathbf{\hat{H}}_{tot}^{2}-\mathbf{\hat{H}}_{S}^{2}|}\Big\|_{H.S.}+\Big\|\mathbf{\hat{H}}^{2}_{S}\Big\|_{H.S.}\Bigg) \leq
\end{equation}
\begin{equation}
\frac{t}{\sqrt{d^{2}-1}}\bigg(\Big\|\sqrt{|\mathbf{\hat{H}}_{tot}^{2}-\mathbf{\hat{H}}_{S}^{2}|}\big\|_{H.S.}+\big\|\mathbf{\hat{H}}_{S}\big\|_{H.S.}\Big) =
\end{equation}
\begin{equation}
\frac{1}{\sqrt{d^{2}-1}}\int_{0}^{t}\bigg(\Big\|\sqrt{|\mathbf{\hat{H}}_{tot}^{2}-\mathbf{\hat{H}}_{S}^{2}|}\Big\|_{H.S.}+\big\|\mathbf{\hat{H}}^{2}_{S}\big\|_{H.S.}\bigg)dt^{\prime} =
\end{equation} 
\begin{equation}
 \mathscr{G}_{hs}(e^{-it\sqrt{|\mathbf{\hat{H}}_{tot}^{2}-\mathbf{\hat{H}}_{S}^{2}|}}) + \mathscr{G}_{hs}(e^{-it\mathbf{\hat{H}}_{S}})  
\end{equation}
\end{proof}
With Remarks \ref{eqn:remark1} and \ref{eqn:remark2} along with Proposition \ref{eqn:prop} we easily obtain the following result. 
\begin{definition9}
\begin{equation}
\label{eqn:cc}
\mathcal{G}_{hs}(\Lambda_{t}) \leq \mathscr{G}_{hs}(e^{-it\mathbf{\hat{H}}_{S}})
\end{equation}
\end{definition9}
\begin{proof}
\begin{equation}
\mathcal{G}_{hs}\big(\Lambda_{t}\big):=\mathscr{G}_{hs}(e^{-it\boldsymbol{\hat{H}}_{tot}}) - \mathscr{G}_{hs}(e^{-it\big(\sqrt{|\mathbf{\hat{H}}_{tot}^{2}-\mathbf{\hat{H}}_{S}^{2}|}\big)})\leq
\end{equation}  
\begin{equation}
\mathscr{G}_{hs}(e^{-it\boldsymbol{\hat{H}}_{S}}) +\mathscr{G}_{hs}(e^{-it\sqrt{|\mathbf{\hat{H}}_{tot}^{2}-\mathbf{\hat{H}}_{S}^{2}|}})- 
\end{equation}
\begin{equation}
\mathscr{G}_{hs}(e^{-it\sqrt{|\mathbf{\hat{H}}_{tot}^{2}-\mathbf{\hat{H}}_{S}^{2}|}}) \leq 
\end{equation}
\begin{equation}
\mathscr{G}_{hs}(e^{-it\boldsymbol{\hat{H}}_{S}})
\end{equation}
\end{proof}
Indeed, this shows that there is loss of geometric complexity in the dynamics associated with the system when there are non-unitary effects present. 
Now, to further test the sensibility of the channel-complexity as a definition for geometric complexity, let us consider the case where  $\mathbf{\hat{H}}_{tot}= \mathbf{\hat{H}}_{S}+\mathbf{\hat{H}}_{E}$. Here, 
\begin{equation}
\mathcal{G}_{hs}(\Lambda_{t}) = \mathscr{G}_{hs}(e^{-it(\mathbf{\hat{H}}_{S}+\mathbf{\hat{H}}_{E})}) -   \mathscr{G}_{hs}(e^{-it\mathbf{\hat{H}}_{E}}) \leq
\end{equation}
\begin{equation}
 \mathscr{G}_{hs}(e^{-it\mathbf{\hat{H}}_{S}})
\end{equation}
where we have employed Remarks \ref{eqn:remark1} and \ref{eqn:remark2}. The latter confirms that in the case where there are no interactions between the system $E$ and the environment, the Channel-Complexity is bounded above via the geometric complexity of the dynamics generated by $\mathbf{\hat{H}}_{S}$ pertaining to the system.

Next, we verify that the Channel-Complexity contains geometrical information pertaining to the interaction between the system and the environment. To do this it suffices to consider the case where $\mathbf{\hat{H}}_{I} \neq 0$ and $\mathbf{\hat{H}}_{E} = 0 $. Here, $\sqrt{|\mathbf{\hat{H}}_{tot}^{2}-\mathbf{\hat{H}}_{S}^{2}|} = \sqrt{|\mathbf{\hat{H}}_{I}^{2}-\mathbf{\hat{H}}_{S}^{2}|} \neq \mathbf{\hat{H}}_{S}+\mathbf{\hat{H}}_{I}$. Finally, note that there is no geometric complexity when the only dynamics is given by the interaction Hamiltonian $\mathbf{\hat{H}}_{I}$. Let $\mathbf{\hat{H}}_{S} = 0 = \mathbf{\hat{H}}_{E}$, then the following holds.
\begin{equation}
\mathcal{G}_{hs}\big(\Lambda_{t}\big):=
\end{equation}
\begin{equation}
\mathscr{G}_{hs}(e^{-it\boldsymbol{\hat{H}}_{tot}}) - 
\mathscr{G}_{hs}(e^{-it\sqrt{|\mathbf{\hat{H}}_{tot}^{2}-\mathbf{\hat{H}}_{S}^{2}|}}) =    
\end{equation}
\begin{equation}
\mathscr{G}_{hs}(e^{-it\boldsymbol{\hat{H}}_{I}}\big) - 
\mathscr{G}_{hs}(e^{-it\sqrt{\mathbf{\hat{H}}_{I}^{2}}}) = 
\end{equation}
\begin{equation}
\mathscr{G}_{hs}(e^{-it\boldsymbol{\hat{H}}_{I}}) - \mathscr{G}_{hs}(e^{-it\mathbf{\hat{H}}_{I}}) = 0
\end{equation}

\section{Noise complexity} 
\;\;\; Quantum channels are used to model noise and dissipation. Considering the case where noise/dissipation can be modeled by an auxiliary system as we have done in the definition of $\Lambda_{t}(\cdot)$ in (\ref{eqn:nonuni2}), we may employ channel complexity to define an error/noise measure. 
\begin{definition}[Noise complexity]
\label{eqn:nscomp}
Let $\Lambda_{t}(\cdot)$ be defined as in (\ref{eqn:nonuni2}). We define the associated Noise complexity as follows.
\begin{equation}
\mathscr{N}_{hs}\big(\Lambda_{t}\big):= \big|\mathcal{G}_{hs}\big(\Lambda_{t}\big)-\mathscr{G}_{hs}(e^{-it\mathbf{\hat{H}}_{S}})\big|
\end{equation} 
\end{definition}
This is a measure of the amount of disturbance that is effectuated in the system $S$ from the quantum channel $\Lambda_{t}(\cdot)$. The advantage of such a measure is that it allows us to attribute a geometrical interpretation to the noise in terms of the notions of geometric complexity and channel-complexity; namely, the noise complexity $\mathscr{N}_{hs}(\Lambda_{t})$ may be interpreted as the channel-complexity of the "quantum" gate necessary to correct errors induced by the quantum channel $\Lambda_{t}(\cdot)$. Below, we present lower and upper bounds for the Noise Complexity.

\begin{definition9}
\begin{equation}
\big|\mathcal{G}_{hs}\big(\Lambda_{t}\big)-\mathscr{G}_{hs}(e^{-it\mathbf{\hat{H}}_{S}})\big| \geq
\end{equation}
\begin{equation}
\mathscr{G}_{hs}(e^{-it\big(\sqrt{|\mathbf{\hat{H}}_{tot}^{2}-\mathbf{\hat{H}}_{S}^{2}|}+|\mathbf{\hat{H}}_{S}|\big)})-\mathscr{G}_{hs}(e^{-it\boldsymbol{\hat{H}}_{tot}}\big)
\end{equation}
\end{definition9}
\begin{proof}
First note that,
\begin{equation}
\mathscr{G}_{hs}(e^{-it(\sqrt{|\mathbf{\hat{H}}_{tot}^{2}-\mathbf{\hat{H}}_{S}^{2}|}+|\mathbf{\hat{H}}_{S}|)})= 
\end{equation}
\begin{equation}
\frac{t}{\sqrt{d^{2}-1}}\Big\| \sqrt{|\mathbf{\hat{H}}_{tot}^{2}-\mathbf{\hat{H}}_{S}^{2}|}+|\mathbf{\hat{H}}_{S}|\Big\|_{H.S.}= 
\end{equation}
\begin{equation}
\frac{t}{\sqrt{d^{2}-1}}Tr\Big\{\mathbf{\hat{H}}_{tot}^{2}+\Big\{\sqrt{|\mathbf{\hat{H}}_{tot}^{2}-\mathbf{\hat{H}}_{S}^{2}|},\;|\mathbf{\hat{H}}_{S}|\Big\}  \Big\}  \geq 
\end{equation}
\begin{equation}
 \frac{t}{\sqrt{d^{2}-1}}Tr\big\{\mathbf{\hat{H}}_{tot}^{2}\big\} = \mathscr{G}_{hs}(e^{-it\mathbf{\hat{H}}_{tot}})  
\end{equation}
where we have used the fact that the anticommutaor of two positive operators is a positive operator in the last step. 

Now, using Remarks \ref{eqn:remark1} and \ref{eqn:remark2} we know that 
\begin{equation}
\mathscr{G}_{hs}(e^{-it\mathbf{\hat{H}}_{S}})+\mathscr{G}_{hs}(e^{\sqrt{|\mathbf{\hat{H}}_{tot}^{2}-\mathbf{\hat{H}}_{S}^{2}|}}) =
\end{equation}
\begin{equation}
\mathscr{G}_{hs}(e^{-it|\mathbf{\hat{H}}_{S}|})+\mathscr{G}_{hs}(e^{\sqrt{|\mathbf{\hat{H}}_{tot}^{2}-\mathbf{\hat{H}}_{S}^{2}|}}) \geq  
\end{equation}
\begin{equation}
\mathscr{G}_{hs}(e^{-it(\sqrt{|\mathbf{\hat{H}}_{tot}^{2}-\mathbf{\hat{H}}_{S}^{2}|}+|\mathbf{\hat{H}}|)})     
\end{equation}
From the above, it follows that
\begin{equation}
\big|\mathcal{G}_{hs}\big(\Lambda_{t}\big)-\mathscr{G}_{hs}(e^{-it\mathbf{\hat{H}}_{S}})\big| =  
\end{equation}
\begin{equation}
\mathscr{G}_{hs}(e^{-it\big(\sqrt{|\mathbf{\hat{H}}_{tot}^{2}-\mathbf{\hat{H}}_{S}^{2}|}\big)})+\mathscr{G}_{hs}(e^{-it\mathbf{\hat{H}}_{S}})-\mathscr{G}_{hs}(e^{-it\boldsymbol{\hat{H}}_{tot}}\big) =
\end{equation}
\begin{equation}
\mathscr{G}_{hs}(e^{-it\big(\sqrt{|\mathbf{\hat{H}}_{tot}^{2}-\mathbf{\hat{H}}_{S}^{2}|}\big)})+\mathscr{G}_{hs}(e^{-it|\mathbf{\hat{H}}_{S}|})-\mathscr{G}_{hs}(e^{-it\boldsymbol{\hat{H}}_{tot}}) \geq
\end{equation}
\begin{equation}
 \mathscr{G}_{hs}(e^{-it\big(\sqrt{|\mathbf{\hat{H}}_{tot}^{2}-\mathbf{\hat{H}}_{S}^{2}|}+|\mathbf{\hat{H}}_{S}|\big)})-\mathscr{G}_{hs}(e^{-it\boldsymbol{\hat{H}}_{tot}}) 
\end{equation}
\end{proof}
We may also bound the Noise Complexity from above; for this, we will need the following lemma, which is just more general exposition of Proposition \ref{eqn:prop} with a different proof. 
\begin{definition2}
\begin{equation}
\big| \mathscr{G}_{hs}(e^{-it\mathbf{\hat{A}}})-\mathscr{G}_{hs}(e^{-it\mathbf{\hat{B}}})\big|\leq \mathscr{G}_{hs}(e^{-it\sqrt{|\mathbf{\hat{A}}^{2}-\mathbf{\hat{B}}^{2}|}})
\end{equation}
\end{definition2}
\begin{proof}
\begin{equation}
\big| \mathscr{G}_{hs}(e^{-it\mathbf{\hat{A}}})-\mathscr{G}_{hs}(e^{-it\mathbf{\hat{B}}})\big| = 
\end{equation}
\begin{equation}
\frac{t}{\sqrt{d^{2}-1}}\Big|\big\|\mathbf{\hat{A}}\big\|_{H.S.}-\big\|\mathbf{\hat{B}}\big\|_{H.S.}\Big| = 
\end{equation}
\begin{equation}
\label{eqn:sqrin}
 \frac{t}{\sqrt{d^{2}-1}}\Big|\sqrt{Tr\big\{\mathbf{\hat{A}}^{2}\big\}}-\sqrt{Tr\big\{\mathbf{\hat{B}}^{2}\big\}}\Big| \leq  
 \end{equation}
 \begin{equation}
 \frac{t}{\sqrt{d^{2}-1}}\sqrt{\Big|Tr\big\{\mathbf{\hat{A}}^{2}\big\}-Tr\big\{\mathbf{\hat{B}}^{2}\big\}\Big|} =
\end{equation}
\begin{equation}
\frac{t}{\sqrt{d^{2}-1}}\sqrt{\Big|Tr\big\{\mathbf{\hat{A}}^{2}-\mathbf{\hat{B}}^{2}\big\}\Big|} \leq \frac{t}{\sqrt{d^{2}-1}}\sqrt{Tr\big\{\big|\mathbf{\hat{A}}^{2}-\mathbf{\hat{B}}^{2}\big|\big\}} =
\end{equation}
\begin{equation}
\frac{t}{\sqrt{d^{2}-1}}\Big\|\sqrt{\big|\mathbf{\hat{A}}^{2}-\mathbf{\hat{B}}^{2}\big|}\Big\|_{H.S.} = \mathscr{G}_{hs}(e^{-it\sqrt{|\mathbf{\hat{A}}^{2}-\mathbf{\hat{B}}^{2}|}}) 
\end{equation}
where we have used the fact that $|\sqrt{a}-\sqrt{b}|\leq \sqrt{|a-b|}$ for any real $a$ and $b$ in (\ref{eqn:sqrin}). 
\end{proof}
We now prove an upper bound to the Noise Complexity. 

\begin{Co}
\begin{equation}
\mathscr{N}_{hs}(\Lambda_{t}):=\big|\mathcal{G}_{hs}\big(\Lambda_{t}\big)-\mathscr{G}_{hs}(e^{-it\mathbf{\hat{H}}_{S}})\big|\leq
\end{equation}
 \begin{equation}
\mathscr{G}_{hs}(e^{-it\mathbf{\hat{H}}_{S}})- \mathscr{D}_{hs}\big( e^{-it\boldsymbol{\hat{H}}_{tot}}, e^{-it\big(\sqrt{|\mathbf{\hat{H}}_{tot}^{2}-\mathbf{\hat{H}}_{S}^{2}|}\big)}\big) 
\end{equation}
\end{Co}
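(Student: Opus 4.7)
The plan is to unpack $\mathscr{N}_{hs}(\Lambda_t)$ via the definitions of noise complexity and channel complexity, and then control the resulting expression with a triangle-inequality argument for the Carnot-Carath\'eodory metric $\mathscr{D}_{hs}$. First I would invoke the preceding Proposition, which guarantees $\mathcal{G}_{hs}(\Lambda_t) \leq \mathscr{G}_{hs}(e^{-it\mathbf{\hat{H}}_S})$, so that the absolute value in the definition of $\mathscr{N}_{hs}$ can be dropped, yielding
\[
\mathscr{N}_{hs}(\Lambda_t) = \mathscr{G}_{hs}(e^{-it\mathbf{\hat{H}}_S}) - \mathscr{G}_{hs}(e^{-it\mathbf{\hat{H}}_{tot}}) + \mathscr{G}_{hs}\bigl(e^{-it\sqrt{|\mathbf{\hat{H}}_{tot}^2-\mathbf{\hat{H}}_S^2|}}\bigr).
\]
The claim then reduces to an inequality that weighs the difference $\mathscr{G}_{hs}(e^{-it\mathbf{\hat{H}}_{tot}}) - \mathscr{G}_{hs}(e^{-it\sqrt{|\mathbf{\hat{H}}_{tot}^2-\mathbf{\hat{H}}_S^2|}})$ against the intrinsic distance $\mathscr{D}_{hs}(e^{-it\mathbf{\hat{H}}_{tot}}, e^{-it\sqrt{|\mathbf{\hat{H}}_{tot}^2-\mathbf{\hat{H}}_S^2|}})$.

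From here I would use the identification $\mathscr{G}_{hs}(\mathbf{\hat{A}}) = \mathscr{D}_{hs}(\mathbb{I}, \mathbf{\hat{A}})$ and apply the triangle inequality for $\mathscr{D}_{hs}$ with the identity $\mathbb{I}$ inserted as an intermediate point between the two unitaries of interest. This produces a direct bound relating the two geometric complexities and their pairwise Carnot-Carath\'eodory distance, which then plugs straight into the three-term expression for $\mathscr{N}_{hs}(\Lambda_t)$ obtained above. As an alternative route, the preceding Lemma (giving $|\mathscr{G}_{hs}(e^{-it\mathbf{\hat{A}}}) - \mathscr{G}_{hs}(e^{-it\mathbf{\hat{B}}})| \leq \mathscr{G}_{hs}(e^{-it\sqrt{|\mathbf{\hat{A}}^2-\mathbf{\hat{B}}^2|}})$) provides Hilbert-Schmidt-specific control that sidesteps having to analyze explicit geodesic paths in $SU(N)$.

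The delicate step will be matching the direction of the triangle inequality to the direction of the desired bound. The metric triangle inequality with $\mathbb{I}$ as an intermediate point naturally yields
\[
\mathscr{G}_{hs}(e^{-it\mathbf{\hat{H}}_{tot}}) - \mathscr{G}_{hs}\bigl(e^{-it\sqrt{|\mathbf{\hat{H}}_{tot}^2-\mathbf{\hat{H}}_S^2|}}\bigr) \leq \mathscr{D}_{hs}\bigl(e^{-it\mathbf{\hat{H}}_{tot}}, e^{-it\sqrt{|\mathbf{\hat{H}}_{tot}^2-\mathbf{\hat{H}}_S^2|}}\bigr),
\]
so threading the signs through the three-term expression for $\mathscr{N}_{hs}$ requires care; in particular, one must feed the \emph{reverse} direction of the triangle inequality, applied to the pair $\bigl(e^{-it\sqrt{|\mathbf{\hat{H}}_{tot}^2-\mathbf{\hat{H}}_S^2|}}, e^{-it\mathbf{\hat{H}}_{tot}}\bigr)$, into the subtracted term. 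I expect this sign bookkeeping, together with verifying that the resulting right-hand side remains non-negative (so that the bound is not vacuous), to be the main technical subtlety. Once this is settled, the rest of the argument collapses into a one-line substitution of the bound on the difference of geometric complexities into the formula for $\mathscr{N}_{hs}$.
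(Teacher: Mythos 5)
Your reduction matches the paper's: you expand $\mathcal{G}_{hs}(\Lambda_t)$, use the preceding proposition $\mathcal{G}_{hs}(\Lambda_t)\leq \mathscr{G}_{hs}(e^{-it\mathbf{\hat{H}}_S})$ to drop the outer absolute value, and arrive at
\[
\mathscr{N}_{hs}(\Lambda_t)=\mathscr{G}_{hs}(e^{-it\mathbf{\hat{H}}_S})+\mathscr{G}_{hs}\big(e^{-it\sqrt{|\mathbf{\hat{H}}_{tot}^{2}-\mathbf{\hat{H}}_S^{2}|}}\big)-\mathscr{G}_{hs}(e^{-it\mathbf{\hat{H}}_{tot}}),
\]
which is exactly the paper's first step. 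The corollary is then equivalent to
\[
\mathscr{D}_{hs}\big(e^{-it\mathbf{\hat{H}}_{tot}},\,e^{-it\sqrt{|\mathbf{\hat{H}}_{tot}^{2}-\mathbf{\hat{H}}_S^{2}|}}\big)\;\leq\;\mathscr{G}_{hs}(e^{-it\mathbf{\hat{H}}_{tot}})-\mathscr{G}_{hs}\big(e^{-it\sqrt{|\mathbf{\hat{H}}_{tot}^{2}-\mathbf{\hat{H}}_S^{2}|}}\big),
\]
and this is where your proposal stops short. The triangle inequality --- in either orientation, with $\mathbb{I}$ as intermediate point or applied to the swapped pair --- only ever yields $|\mathscr{G}_{hs}(\mathbf{\hat{U}})-\mathscr{G}_{hs}(\mathbf{\hat{V}})|\leq \mathscr{D}_{hs}(\mathbf{\hat{U}},\mathbf{\hat{V}})$, i.e.\ the reverse of what is needed. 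You flag this as ``sign bookkeeping,'' but no bookkeeping with metric axioms can produce the required direction: $\mathscr{D}_{hs}(\mathbf{\hat{U}},\mathbf{\hat{V}})\leq\mathscr{G}_{hs}(\mathbf{\hat{U}})-\mathscr{G}_{hs}(\mathbf{\hat{V}})$ forces equality with the reverse triangle inequality, i.e.\ it holds only when $\mathbf{\hat{V}}$ lies on a minimal geodesic from $\mathbb{I}$ to $\mathbf{\hat{U}}$, which is neither established nor generally true for $\mathbf{\hat{U}}=e^{-it\mathbf{\hat{H}}_{tot}}$ and $\mathbf{\hat{V}}=e^{-it\sqrt{|\mathbf{\hat{H}}_{tot}^{2}-\mathbf{\hat{H}}_S^{2}|}}$. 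The alternative route you mention, the Lemma $|\mathscr{G}_{hs}(e^{-it\mathbf{\hat{A}}})-\mathscr{G}_{hs}(e^{-it\mathbf{\hat{B}}})|\leq\mathscr{G}_{hs}(e^{-it\sqrt{|\mathbf{\hat{A}}^{2}-\mathbf{\hat{B}}^{2}|}})$, bounds a difference of complexities from above; it says nothing that would bound $\mathscr{D}_{hs}$ of the pair from above by that difference, so it does not close the gap either.

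For comparison, the paper's proof, after the same first step, rewrites the expression as $\mathscr{G}_{hs}(e^{-it\mathbf{\hat{H}}_S})-\big|\mathscr{G}_{hs}(e^{-it\mathbf{\hat{H}}_{tot}})-\mathscr{G}_{hs}(e^{-it\sqrt{|\mathbf{\hat{H}}_{tot}^{2}-\mathbf{\hat{H}}_S^{2}|}})\big|$ and then replaces the absolute difference by the distance of the pair, i.e.\ it invokes precisely the bound $\mathscr{D}_{hs}(\mathbf{\hat{U}},\mathbf{\hat{V}})\leq|\mathscr{G}_{hs}(\mathbf{\hat{U}})-\mathscr{G}_{hs}(\mathbf{\hat{V}})|$ that your triangle-inequality plan cannot supply (the paper asserts it rather than deriving it from the metric axioms). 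So your proposal faithfully reproduces the reduction, but the single decisive inequality is left unproven, and the tool you designate for it provably points the other way; as written, the argument fails at that step.
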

\begin{proof}
\begin{equation}
\big|\mathcal{G}_{hs}\big(\Lambda_{t}\big)-\mathscr{G}_{hs}(e^{-it\mathbf{\hat{H}}_{S}})\big| = 
\end{equation}
\smaller
\begin{equation}
\mathscr{G}_{hs}(e^{-it\big(\sqrt{|\mathbf{\hat{H}}_{tot}^{2}-\mathbf{\hat{H}}_{S}^{2}|}\big)})+\mathscr{G}_{hs}(e^{-it\mathbf{\hat{H}}_{S}})-\mathscr{G}_{hs}\big(e^{-it\boldsymbol{\hat{H}}_{tot}}\big) =
\end{equation}
\begin{equation}
\mathscr{G}_{hs}(e^{-it\mathbf{\hat{H}}_{S}})-|\mathscr{G}_{hs}\big(e^{-it\boldsymbol{\hat{H}}_{tot}}\big)-\mathscr{G}_{hs}\big(e^{-it\big(\sqrt{|\mathbf{\hat{H}}_{tot}^{2}-\mathbf{\hat{H}}_{S}^{2}|}\big)}\big)| \leq    
\end{equation}
\begin{equation}
\label{eqn:boundff}
\mathscr{G}_{hs}(e^{-it\mathbf{\hat{H}}_{S}})- \mathscr{D}_{hs}\big( e^{-it\boldsymbol{\hat{H}}_{tot}}, e^{-it\big(\sqrt{|\mathbf{\hat{H}}_{tot}^{2}-\mathbf{\hat{H}}_{S}^{2}|}\big)}\big)  \end{equation}
\end{proof}
\normalsize

\section{Geometric complexity and random noise}
\label{eqn:RODES}
Until now, all of the dynamics we have considered have been deterministic. Indeed, quantum channels provide a deterministic means by which to model noise in a quantum system via connecting said system to an environment. There are, nevertheless, other mechanisms by which to model noise for quantum systems. In particular, one may utilize the Random Schr\"{o}dinger equation \cite{cz}, which is a Random ordinary differential equation (RODE). 
\begin{equation}
\label{eqn:RODE}
\partial_{t}\mathbf{\hat{U}}_{S}(t,\xi) = -i\big(\mathbf{\hat{H}}_{S}(t)+\mathbf{\hat{H}}_{R}(t, \xi)\big)\mathbf{\hat{U}}_{S}(t, \xi), 
\end{equation}
 \begin{equation}
\mathbf{\hat{U}}_{S}(0,\xi) = \mathbb{I}
\end{equation}
Here, $\mathbf{\hat{H}}_{S}(t)$ corresponds to the deterministic dynamics while $\mathbf{\hat{H}}_{R}(t,\xi)$ is a Hermitian matrix-valued random process which may depend on $\mathbf{\hat{H}}_{S}(t)$. We will denote $\mathbf{\hat{U}}_{S}(t,\xi)$ as the solution to the RODE (\ref{eqn:RODE}), including the term $\xi$ to emphasize that it is a random process. When there is no noise, we will just write $\mathbf{\hat{U}}_{S}(t)$ as the solution to (\ref{eqn:RODE}) for the case where $\mathbf{\hat{H}}_{R}(t,\xi) = 0 $, which is just the usual Schr\"{o}dinger equation.

In \cite{cz}, the authors develop a measure for the noise effectuated on a system via the Hermitian-matrix valued process $\mathbf{\hat{H}}_{R}(t,\xi)$ which leverages the Carnot- Carath\'eodory metric which conincides with the matric that we have been using in he previous sections up to an over all scaling. Namely, they study the fluctuations of the noisy dynamics, letting $\mathbf{\hat{V}}(t):= \mathbf{E}[\mathbf{\hat{U}}_{S}(t,\xi)]$, the measure of interest is $\mathscr{D}_{hs}(\mathbf{\hat{U}}(t,\xi),\mathbf{\hat{V}})$. Note however that
\begin{equation}
\label{eqn:upb}
\mathscr{D}_{hs}(\mathbf{\hat{U}}(t,\xi),\mathbf{\hat{U}}_{S}(t))\leq
\end{equation}
 \begin{equation}
\mathscr{D}_{hs}(\mathbf{\hat{U}}(t,\xi),\mathbf{\hat{V}}(t))+\mathscr{D}_{hs}(\mathbf{\hat{V}}(t),\mathbf{\hat{U}}(t))
\end{equation}
which gives us a way to estimate the error/perturbation caused by the noise on the system in a geometrical sense. From (\ref{eqn:upb}) we see that as the number of trajectories becomes large, the error $\mathscr{D}_{hs}(\mathbf{\hat{U}}(t,\xi),\mathbf{\hat{U}}_{S}(t))$ approaches $\mathscr{D}_{hs}(\mathbf{\hat{V}}(t),\mathbf{\hat{U}}(t))$ in probability; this can be shown using the usual central-limit-theorem-type-arguments from classical probability.
Of course, when the noise is such that $\mathbf{E}[\mathbf{\hat{U}}_{S}(t,\xi)] = \mathbf{\hat{U}}_{S}(t)$, we have $\mathscr{D}_{hs}(\mathbf{\hat{U}}(t,\xi),\mathbf{\hat{U}}_{S}(t)) = \mathscr{D}_{hs}(\mathbf{\hat{U}}(t,\xi),\mathbf{\hat{V}}(t)) $. For such a case $\mathscr{D}_{hs}(\mathbf{\hat{U}}(t,\xi),\mathbf{\hat{U}}_{S}(t))$ will then converge to $0$ in probability. This latter case is of particular interest for quantum computing, where one would like the statistics of a large sample size of quantum circuit outputs to converge to those characterized by a desired unitary operator $\mathbf{\hat{U}}_{S}(t)$ which characterizes the quantum algorithm being executed (excluding the measurement part). 

In \cite{cz} the authors furthermore narrow down the focus of their paper to consider the case where the deterministic term $\mathbf{\hat{H}}_{S}(t) = \sum_{j}h_{j}(t)\mathbf{\hat{H}}_{j}$ where $\{\mathbf{\hat{H}}_{i}\}_{i}$ are just elements of the Lie algebra $\mathfrak{su}(N)$ associated with the Lie group $SU(N)$ acting on an $N$-dimensional Hilbert space $\mathscr{H}$ for an $N$- dimensional quantum system. Then, the authors consider the case where the random term $\mathbf{\hat{H}}_{R}(t,\xi)$ satisfies the following bound.
\begin{equation}
ess\;sup\|\mathbf{\hat{H}}_{R}(t,\xi)\|_{H.S.} = \sqrt{\sum_{j}l_{j}h_{j}^{2}(t)}
\end{equation}
Under such conditions, it is shown that 
\begin{equation}
\mathscr{D}_{hs}(\mathbf{\hat{U}}_{S}(t,\xi), \mathbf{\hat{V}})\leq \int_{0}^{t}\|\mathbf{\hat{H}}_{R}(t,\xi)\|_{H.S.}ds \leq
\end{equation}
\begin{equation}\sqrt{\sum_{j}l_{j}h_{j}^{2}(t)} =  \int_{0}^{t}\sqrt{\langle\mathbf{\hat{H}}_{S}(t), \mathbf{\hat{H}}_{S}(t)\rangle_{\Omega}}ds
\end{equation}
where the Riemannian metric $\langle\cdot, \cdot\rangle_{\Omega}$ on $SU(N)$ is of the type already discussed in (\ref{eqn:metric}). Thus, noise is related to the warping of $SU(N)$ via replacing its usual metric (Hilbert, Schmidt metric) with the non-homogeneous one, $\langle\cdot,\cdot\rangle_{\Omega}$, in the estimation of the fluctuations in a $\mathbf{\hat{H}}_{R}(t,\xi)$-independent manner.

Let us now contrast the above with our definition for noise complexity, i.e. Definition \ref{eqn:nscomp}. There, we concluded with the bound
\begin{equation}
\mathscr{N}_{hs}(\Lambda_{t})\leq
\end{equation}
 \begin{equation}
\mathscr{G}_{hs}(e^{-it\mathbf{\hat{H}}_{S}})- \mathscr{D}_{hs}\big( e^{-it\boldsymbol{\hat{H}}_{tot}}, e^{-it\big(\sqrt{|\mathbf{\hat{H}}_{tot}^{2}-\mathbf{\hat{H}}_{S}^{2}|}\big)}\big)    
\end{equation}
which goes to zero when the interaction and environmental terms $\mathbf{\hat{H}}_{I}$ and $\mathbf{\hat{H}}_{E}$ are equal to zero. Using a similar approach, we will now study the difference between the geometric complexities $\mathscr{G}_{hs}(\mathbf{\hat{U}}_{S}(t))$, and $\mathscr{G}_{hs}(\mathbf{\hat{U}}_{S}(t,\xi))$ corresponding to the cases without and with noise respectively. Now, notice that 
\begin{equation}
\label{eqn:RODENISE}
 |\mathscr{G}_{hs}(\mathbf{\hat{U}}_{S}(t))-\mathscr{G}_{hs}(\mathbf{\hat{U}}_{S}(t,\xi))|\leq
 \end{equation}
 \begin{equation}\mathscr{D}_{hs}(\mathbf{\hat{U}}_{S}(t,\xi),\mathbf{\hat{U}}_{S}(t)) 
\end{equation}
which is the measure of noise studied in \cite{cz}, which was analyzed above. Although the frameworks are different, the techniques are the same; the estimation of the effects of the noise via measures such as the noise complexity and a similar measure (\ref{eqn:RODENISE}) are the heart of the matter.



\begin{section}{Time-dependent Hamiltonians}
\;\;\;A more general type of dynamics may be had if we now assume that the total Hamiltonian in (\ref{eqn:nonuni2}) and (\ref{eqn:nonuni}) is time-dependent. In previous sections we have already discussed the fact that a geodesic between $\mathbb{I}$ and $e^{-i\mathbf{\hat{H}}_{tot}(t)}$ may be parametrized as $\gamma(t)=\mathscr{\hat{T}}e^{-i\int_{0}^{t}\mathbf{\hat{H}}(s)ds}$ where all of the $\mathbf{\hat{H}}(s)$ constitute the set of Hermitian matrices that minimize the functional $\int_{0}^{t}\sqrt{\langle \mathbf{\hat{H}}(s),\mathbf{\hat{H}}(s)\rangle_{\Omega}}ds$ such that $\gamma(t)=\mathscr{\hat{T}}e^{-i\int_{0}^{t}\mathbf{\hat{H}}(s)}ds$ is indeed a geodesic between $\mathbf{I}$ and $e^{-i\mathbf{\hat{H}}_{tot}(t)}$ in $SU(N)$ equipped with the non-homogeneous metric $\langle\cdot, \cdot\rangle_{\Omega}$. In such a case, the definition of Channel-complexity generalizes to the following.

\begin{definition}[Channel-Complexity for time-dependent Hamiltonians]
Let $\Lambda_{t}(\cdot)$ be defined as in (\ref{eqn:nonuni}), but now characterized by the unitary operator $e^{-i\boldsymbol{\hat{H}}_{tot}(t)}$ which is generated by a time dependent Hamiltonian $\mathbf{\hat{H}}_{tot}(t)$. We will assume the decomposition $\mathbf{\hat{H}}_{tot}(t) = \int_{0}^{t}( \mathbf{\hat{H}}_{S}(s)+\mathbf{\hat{H}}_{I}(s)+\mathbf{\hat{H}}_{E}(s))ds$. Below we will just write $\mathbf{\hat{H}}(s):=\mathbf{\hat{H}}_{S}(s)+\mathbf{\hat{H}}_{I}(s)+\mathbf{\hat{H}}_{E}(s)$ for simplicity. The term $\mathbf{\hat{H}}_{S}(t)\otimes\mathbb{I}_{E}$, non-linear in $t$, corresponds to a geodesic in $SU(N)$ equipped with a particular non-homogeneous metric $\langle \cdot, \cdot\rangle_{\Omega}$ as we defined in (\ref{eqn:metric}). Furthermore, define $\|\mathbf{\hat{A}}\|_{\Omega}:=\sqrt{\langle\mathbf{\hat{A}}, \mathbf{\hat{A}}\rangle_{\Omega}}$. Finally, we define the Channel-Capacity in this case as follows
\begin{equation}
\mathcal{G}_{\Omega}\big(\Lambda_{t}\big):=
\end{equation}
\begin{equation}
\label{eqn:defchann}
 \frac{1}{\sqrt{d^{2}-1}}\int_{0}^{t}\bigg|\|\mathbf{\hat{H}}(s)\|_{\Omega}-\sqrt{\big|\|\mathbf{\hat{H}}^{2}(s)\|_{\Omega}-\|\mathbf{\hat{H}}^{2}_{S}(s)\|_{\Omega}\big|}\bigg|ds
\end{equation}
where $d$ is the dimension of the total Hilbert space and the Riemannian metric $\langle \cdot, \cdot\rangle_{\Omega}$ is the Hermitian form defined as follows.

Expressing $\mathbf{\hat{A}}$ and $\mathbf{\hat{B}}\in \mathfrak{su}(N)$ as column vectors, with each entry corresponding to a basis element of $\mathfrak{su}(N)$, i.e. 
\begin{equation}
\mathbf{Vec}(\mathbf{\hat{A}}) = \begin{pmatrix} A_{1} \\ \vdots\\ A_{N^{2}-1} \end{pmatrix} 
\;\;\;\mathbf{Vec}(\mathbf{\hat{B}}) = \begin{pmatrix} B_{1} \\ \vdots\\ B_{N^{2}-1} \end{pmatrix} 
\end{equation}
and letting 

\begin{equation}
\Omega = \begin{pmatrix}l_{1} & & \\ & \ddots & \\ & & l_{N^{2}-1}\end{pmatrix}
\end{equation}
Then,
\begin{equation}
 \big\langle \mathbf{\hat{A}},\mathbf{\hat{B}}\big\rangle_{\Omega}:=\frac{1}{d^{2}-1}\mathbf{Vec}(\mathbf{\hat{A}})^{\dagger}\Omega\mathbf{Vec}(\mathbf{\hat{B}}) 
\end{equation}
The subscript $\Omega$ signifies that we only consider the subspace associated with the environmental degrees of freedom. 

\end{definition}
The reason this definition for Channel Complexity looks rather different from the one we studied in earlier sections for time-independent $\mathbf{\hat{H}}_{tot}$ is due to the Hermitian form that one must introduce for the case of time-dependent Hamiltonians. We are still doing morally the same thing, i.e., ignoring information pertaining to the trajectory $\mathbf{\hat{H}}_{tot}(s)$, taken along the optimal trajectory for zero noise, and computing the difference between the lengths of the trajectory involving the full Hamiltonian and the trajectory which ignores the system Hamiltonians at every time $s$. This allows us to measure the influence of the terms attributed to noise, i.e. $\mathbf{\hat{H}}_{I}(s)$ and $\mathbf{\hat{H}}_{E}(s)$.

We now verify that this definition makes sense as a measure of complexity in more mathematical terms. Just as before, we expect to retrieve the usual geometric complexity in the limit $\mathbf{\hat{H}}_{I}(s)=0=\mathbf{\hat{H}}_{E}(s)$ for all $s$. Indeed, this is the case since then, the second term from the left in the integrand becomes zero and we are left with just the integral corresponding to the length of the minimal geodesic connecting the identity operator to the operator $\mathscr{\hat{T}}e^{-i\int_{0}^{t}\mathbf{\hat{H}}_{S}(s)ds}$. It is also simple to see that for $\mathbf{H}_{I}(s)$ being the only non-zero for all $s$, we have $\mathcal{G}_{\Omega}(\Lambda_{t} )= 0$. Now, we show that this generalized channel complexity is bounded above by the usual geometric complexity for the noiseless case. All we need to use is the fact that $\sqrt{|a-b|}\geq|\sqrt{a}-\sqrt{b}|$. Then, 
\begin{equation}
 \mathcal{G}_{\Omega}(\Lambda_{t}):=
 \end{equation}
 \small
 \begin{equation}
  \frac{1}{\sqrt{d^{2}-1}}\int_{0}^{t}\bigg|\|\mathbf{\hat{H}}(s)\|_{\Omega}-\sqrt{\big|\|\mathbf{\hat{H}}^{2}(s)\|_{\Omega}-\|\mathbf{\hat{H}}^{2}_{S}(s)\|_{\Omega}\big|}\bigg|ds \leq
\end{equation}
\normalsize
\begin{equation}
 \frac{1}{\sqrt{d^{2}-1}}\int_{0}^{t}\|\mathbf{\hat{H}}(s)\|_{\Omega}ds = \mathscr{G}_{\Omega}(\mathscr{\hat{T}}e^{-i\int_{0}^{t}\mathbf{\hat{H}}_{S}(s)ds})
\end{equation}
Similar properties and results may be proven for the case of a non-homogeneous metric $\langle\cdot, \cdot\rangle_{\Omega}$ as were proven for the case of time-independent $\mathbf{\hat{H}}_{tot}$ by using similar techniques.  We may now define a generalized version of the Noise-Complexity

\begin{definition}[Noise complexity for $\langle\cdot, \cdot\rangle_{\Omega}$]
\label{eqn:nscomp2}
Let $\Lambda_{t}(\cdot)$ be defined as in (\ref{eqn:nonuni2}) but with a time-dependent Hamiltonian $\mathbf{\hat{H}}_{tot}(t)$. We define the associated Noise complexity as follows.
\begin{equation}
\mathscr{N}_\Omega\big(\Lambda_{t}\big):= \big|\mathcal{G}_{\Omega}\big(\Lambda_{t}\big)-\mathscr{G}_{\Omega}(\mathscr{\hat{T}}e^{-i\int_{0}^{t}\mathbf{\hat{H}}_{s}(s)ds})\big|
\end{equation} 
\end{definition}

\section{A simple example}
\label{eqn:ex}
Consider the following total Hamiltonian which characterizes a perturbation on the systems by the environment; we will assume that the effect of the system on the environment is negligible and so there will be no self-Hamiltonian for the environment. 
\begin{equation}
\mathbf{\hat{H}}_{tot}=\mathbf{\hat{H}}_{S}+\varepsilon\mathbf{\hat{A}}_{S}\otimes \sum_{i}E_{i}\big|E_{i}\big\rangle\big\langle E_{i}\big|
\end{equation}
For simplicity assume that $E_{i}\geq 0$ and that both $\mathbf{\hat{H}}_{S}$ and $\mathbf{\hat{A}}_{S}$ are positive semidefinite operators that commute. For $\varepsilon>0$ small, and for a given $\boldsymbol{\hat{\rho}}\in \mathcal{S}(\mathscr{H}_{S}\otimes \mathscr{H}_{E})$ we have 
\begin{equation}
    Tr_{E}\big\{e^{-it\mathbf{\hat{H}}_{tot}}\boldsymbol{\hat{\rho}}e^{it\mathbf{\hat{H}}_{tot}}\big\} \approx 
\end{equation}
\begin{equation}
    e^{-it\mathbf{\hat{H}}_{S}}\Big(\sum_{i}\alpha_{i}e^{-i\varepsilon tE_{i}\mathbf{\hat{A}}_{S}}\boldsymbol{\hat{\rho}}e^{itE_{i}\mathbf{\hat{A}}_{S}}\Big)e^{i\varepsilon t\mathbf{\hat{H}}_{S}}
\end{equation}
The map 
\begin{equation}
\mathscr{N}_{t}(\cdot):=e^{-it\mathbf{\hat{H}}_{S}}\big(\cdot\big)e^{it\mathbf{\hat{H}}_{S}}
\end{equation}
is of course unitary, whilst the map 
\begin{equation}
\Lambda_{t}(\cdot):=\sum_{i}\alpha_{i}e^{-i\varepsilon tE_{i}\mathbf{\hat{A}}_{S}}(\cdot)e^{i\varepsilon tE_{i}\mathbf{\hat{A}}_{S}}
\end{equation}
is a non-unitary dephasing channel. 
\begin{equation}
\mathcal{G}_{hs}\big(\mathscr{N}_{t}\circ \Lambda_{t}\big(\boldsymbol{\hat{\rho}}\big)\big) = 
\end{equation}
\begin{equation}
\label{eqn:exa}
 \frac{t}{\sqrt{d^{2}_{tot}-1}}\Big(\|\mathbf{\hat{H}}_{S}\|_{\Omega}(1-\sqrt{\varepsilon}\Omega\big(1-\sqrt{\varepsilon}\Omega\big)\Big) +\mathcal{O}(\varepsilon^{3/2})  
\end{equation}
Where $\Omega:=\frac{\sqrt{ (d^{2}_{tot}-1)\langle\mathbf{\hat{A}}_{S},\mathbf{\hat{H}}_{S}\rangle_{hs}\langle E_{i}\rangle}}{\|\mathbf{\hat{H}}_{S}\|_{\Omega}}$. For a proof of the latter please see Appendix \ref{eqn:apexa}. For $\varepsilon$ small enough that the terms of order $3/2$ and greater are negligible, this leads to a decrease in geometric complexity whenever $\varepsilon\leq \frac{\|\mathbf{\hat{H}}_{S}\|_{\Omega}^{2}}{(d^{2}_{tot}-1)\langle\mathbf{\hat{A}}_{S},\mathbf{\hat{H}}_{S}\rangle_{hs}\langle E_{i}\rangle}$. This is a reasonable assumption since Environments tend to be much larger than the system and therefore $\frac{\|\mathbf{\hat{H}}_{S}\|_{\Omega}^{2}}{ (d^{2}_{tot}-1)\langle\mathbf{\hat{A}}_{S},\mathbf{\hat{H}}_{S}\rangle_{hs}\langle E_{i}\rangle}<< 1$. Indeed, above we have used the fact that the anticommutator of two positive semidefinite operators that commute is another positive semidefinite operator as well as the fact that the tensor product of two positive semidefinite operators is yet another positive semidefinite operator. Notice that the term following the minus sign in equation (\ref{eqn:exa}) will always be positive because the Hilbert-Schmidt inner product of two positive semidefinite operators is never negative. In this case the noise-complexity is just $\mathcal{N}(\Lambda_{t})\approx \frac{t}{\sqrt{d_{tot}^{2}-1}}\sqrt{\varepsilon}\Omega(1-\sqrt{\varepsilon}\Omega)$

\section{An abstract definition of a noisy quantum computer}
\;\;\; Recently, efforts have been made to develop mathematical frameworks that abstract the notion of a quantum computer. Inciting such work raises questions regarding what a fundamental computation unit should be; in theory, although a quantum computation algorithm should be implemented via a unitary matrix representing a quantum circuit, it is nevertheless nebulous what the fundamental unit of computation should be. This leads to the disparity amongst members of the quantum computation community when it comes to questions regarding the complexity of a quantum algorithm. Since Nielsen's original work on quantum computational complexity \cite{NielGeo}, efforts have been made to drift away from the traditional conceptualization of a quantum algorithm as a quantum circuit with gates and depth, to one where the physical restrictions in the hardware are characterized by non-homogeneous Riemannian metrics on $SU(N)$; the non-homogeneity, as that exemplified by the metrics introduced earlier in this paper, see (\ref{eqn:metric}), $\langle \cdot, \cdot\rangle_{\Omega}$ may be tuned in order to make dynamics involving more than one and two qubit more costly for example for example. This geometric definition of quantum computational complexity of a quantum algorithm is seamlessly applicable within the framework of quantum annealing, a framework equivalent to that of quantum gates \cite{annealing}, this is not so much the case for the quantum gates framework, which would first require us to deduce the time-dependent Hamiltonian describing the quantum algorithm in question.

Using the framework presented in \cite{Daniela}, we will present here a more general mathematical definition for a quantum computer. This definition will contain two notions of complexity, the algebraic one proposed in \cite{Daniela}, and the geometric one which we have been using in this paper, attributed to \cite{NielGeo}. First, we will highlight the relevant definitions and results from \cite{Daniela}, and then we present our generalization.

\begin{definition} A matrix $\mathbf{\hat{A}}\in \mathbb{M}_{N}(\mathbb{C})$ is called a \emph{random variable} in the algebraic probability space $\big( \mathbb{M}_{N}(\mathbb{C}), \mathbf{\hat{\rho}} \big)$ if $\mathbf{\hat{A}}$ is Hermitian. The spectral theorem then characterizes the events $\big\{\mathbf{\hat{A}} = x \big\}$ as follows. 
\begin{equation}
\mathbf{\hat{P}}_{\big\{\mathbf{\hat{A}}=x\big\}}:= \begin{array}{cc}
  \Bigg\{ & 
    \begin{array}{cc}
\mathbf{\hat{P}}_{x},& if \;\; x\in\sigma(\mathbf{\hat{A}}) \\
\mathbf{\hat{0}}, &\;\; otherwise 
\end{array}
\end{array}
\end{equation}
\end{definition}
Where $\mathbf{\hat{P}}_{x}$ projects to the eigenspace associated with $x$ when $x\in\sigma(\mathbf{\hat{A}})$. The events $\big\{\mathbf{\hat{A}}<x\big\}$ and $\big\{\mathbf{\hat{A}}>x\big\}$ may be defined in a similar fashion. The law of the random variable $\mathbf{\hat{A}}$ in the algebraic probability space $\big( \mathbb{M}_{N}(\mathbb{C}), \mathbf{\hat{\rho}} \big)$ is defined as the function 
\begin{equation}
\mathbb{P}_{\mathbf{\hat{\rho}}}(\mathbf{\hat{A}}= x):= Tr\big\{\mathbf{\hat{\rho}}\mathbf{\hat{P}}_{x}\big\}
\end{equation}
The projectors $\mathbf{\hat{P}}_{x}$ will form a POVM for the case where the random variable $\mathbf{\hat{A}}$ has full support. This means that $\sum_{\lambda\in\sigma(\mathbf{\hat{A}})}\mathbb{P}_{\mathbf{\hat{\rho}}}(\mathbf{\hat{A}}=\lambda) = Tr\big\{\boldsymbol{\hat{\rho}}\sum_{\lambda\in\sigma(\mathbf{\hat{A}})}\mathbf{\hat{P}}_{\lambda}\big\} = Tr\big\{\boldsymbol{\hat{\rho}}\big\} = 1$.
\begin{definition}{Single-qubit quantum processor unit:}
A single-qubit quantum processor unit is defined as a random variable $\mathbf{\hat{A}}$ in the algebraic probability space $\big(\mathbb{M}_{2}(\mathbb{C}), \mathbf{\hat{\rho}}\big),$ where $\mathbf{\hat{\rho}}$ is a pure state. 
\end{definition} 
\begin{definition}{An n-qubit Quantum Processing Unit (QPU):}
An n-qubit QPU is defined by the tensor product of n-single qubit QPU $\bigotimes_{k=1}^{n}\mathbf{\hat{A}_{k}}$, where the $\mathbf{\hat{A}}_{k}$ are all single-qubit QPU. Given the random variables $\mathbf{\hat{A}}_{k}$ in the corresponding algebraic probability space $\big(\mathbb{M}_{2}(\mathbb{C}),\mathbf{\hat{\rho}}_{k}\big)$, where $\mathbf{\hat{\rho}}_{k} = |\psi_{k}\rangle \langle\psi_{k}| \in \mathcal{S}\big(\mathbb{M}_{2}(\mathbb{C})\big)$ for some unit vector $|\psi_{k}\rangle \in \mathbb{CP}$, then $\bigotimes_{k=1}^{n}\mathbf{\hat{A}}_{k}$ is a random variable in the algebraic probability space $\big(\mathbb{M}_{2^{n}}(\mathbb{C}), \bigotimes_{k=1}^{n}\mathbf{\hat{\rho}}_{k}\big)$, where $\mathbf{\hat{\rho}}_{k}\in\mathcal{S}_{1}\big(\mathbb{M}_{2^{n}}(\mathbb{C})\big). $
\end{definition}
\begin{definition}{Universal digital quantum computer (UDQC):}
A universal digital quantum computer (UDQC) is a triple $(\mathbf{\hat{A}},\boldsymbol{\hat{\rho}}_{0}, \mathscr{N}_{t})$ where:
\begin{itemize}
\item $\mathbf{\hat{A}}$ is an $n-$qubit QPU with 
\begin{equation}
\mathbf{\hat{A}} = \mathbf{\hat{U}}\Big(\sum_{k\in\mathbb{Z}_{N}}\lambda_{b_{n}(k)}|b_{n}(k)\rangle\langle b_{n}(k)|\Big)\mathbf{\hat{U}}^{\dagger}
\end{equation}  
for some unitary matrix $\mathbf{\hat{U}}\in SU(N)$.

\item $\mathbf{\hat{\rho}}_{0} = |b_{n}(0)\rangle\langle b_{n}(0)|$ is the fixed initial pure state, and 

\item 
$\mathscr{N}_{t}$ is a unitary quantum channel which solves the von Neumann Liouiville equation, i.e. 
\begin{equation}
i\partial_{t}\mathscr{N}_{t}(\boldsymbol{\hat{\rho}}_{0})= \big[\mathbf{\hat{H}}, \mathscr{N}_{t}(\boldsymbol{\hat{\rho}}_{0})\big]
\end{equation}
where $\mathbf{\hat{H}}$ is the generator of the unitary map $\mathscr{N}_{t}$. 
\end{itemize}
\end{definition}
We now extend the latter definition to include the noisy channels. 
\begin{definition}{Noisy Universal digital quantum computer (NUDQC):} This is just a UDCQ as defined above but now the channel $\mathscr{N}_{t}$ will be replaced by the channel $\Lambda_{t}$, as defined in (\ref{eqn:nonuni2}), with a time-dependent $\mathbf{\hat{H}}_{tot}$.
\end{definition}
In \cite{Daniela}, the notion of a Quantum Computational Procedure is also introduced. The starting point for this notion is a UDQC $(\mathbf{\hat{A}}, \boldsymbol{\hat{\rho}}_{0}, \mathscr{N}_{t})$, and a uitary matrix $\mathbf{\hat{U}}(t)$ generating the map $\mathscr{N}_{t}$. The corresponding quantum computational procedure is hence composed of two steps:
\begin{itemize}
\item Quantum Ciruit Process: A quantum state evolution applies $\mathbf{\hat{U}}(t)$ to the initial state $\boldsymbol{\hat{\rho}}_{0}$ via the map $\mathscr{N}_{t}(\boldsymbol{\hat{\rho}}_{0}):=\mathbf{\hat{U}}(t)\boldsymbol{\hat{\rho}}_{0}\mathbf{\hat{U}}^{\dagger}(t)$, yielding the random variable $\mathbf{\hat{A}}$ in the output algebraic probability space at time $t$ $ (\mathbb{M}_{N}(\mathbb{C}), \mathscr{N}_{t}(\boldsymbol{\hat{\rho}}_{0}))$.
\item  Measurement process: A measurement of $\mathbf{\hat{A}}$ in the output space $(\mathbb{M}_{N}(\mathbb{C}), \mathscr{N}_{t}(\boldsymbol{\hat{\rho}}_{0}))$ reveals the outcome $k \in \mathbb{Z}_{N}$ with probability $\mathbb{P}_{\mathscr{N}_{t}(\boldsymbol{\hat{\rho}}_{0})}(\mathbf{\hat{A}}=k)$.
\end{itemize}

For the case of NUDQC, we now add a third step to the Quantum Computational Procedure notion, consisting of an error mitigation process. The literature surrounding error correction is vast; there is no one-size-fits-all scheme that unites all techniques. A popular approach to quantum error correction involves invertible noisy quantum channels \cite{laflame}. Indeed, when there is invertibility, one can construct a corresponding channel that mitigates the effects of the noise. We will bypass all ambiguity by sticking to a purely geometric notion of quantum error correction, whose utility is to quantify the complexity of an error-correcting code/ mitigating process in geometric terms. The goal of all error correction codes/schemes is to take a state $\boldsymbol{\hat{\rho}}_{noise}$ which has undergone effects due to some noisy channel to the actual state $\boldsymbol{\hat{\rho}}_{actual}$ without the noise. To do this optimally we must move $\boldsymbol{\hat{\rho}}_{noise}$ to $\boldsymbol{\hat\rho}_{actual}$ along a geodesic in $\mathcal{S}(\mathscr{H}_{2^n})$ utilizing the appropriate Riemannian metric, in this case the so-called Bures metric. Using our purely geometric approach, we are able to quantify the error due to noisy channels with the bound (\ref{eqn:boundff}). When one may not construct a channel that takes us from $\boldsymbol{\hat{\rho}}_{noise}$ to $\boldsymbol{\hat\rho}_{actual}$ along a geodesic, one then resorts to statistical inference; namely, with multiple shots of the Noisy Quantum Computational Process in question we may deduce the underlying quantum channel $\mathscr{N}_{t}$ and retrieve $\boldsymbol{\hat{\rho}}_{actual}$ as the variance of the error decrease with the number of shots. 
\begin{definition}{Noisy Quantum Computational Procedure} 
A Noisy Quantum computational Procedure (NQCP) is a QCP with the unitary channel $\mathscr{N}_{t}(\cdot)$ traded in for a non-unitary channel $\Lambda_{t}(\cdot)$. This procedure includes an error mitigation step, which will depend on the scenario. The magnitude of the errors may be measured using the Noise-Complexity defined in Definition \ref{eqn:nscomp2} or the analogous measure for the case where $\Lambda_{t}(\cdot)$ is obtain from a time-dependent Hamiltonian $\mathbf{\hat{H}}_{tot}(t)$.
\end{definition}
\;\;\;In \cite{Daniela}, the concept of an Elementary Quantum Gates is also discussed. This is done to create a framework that quantifies the complexity of a quantum algorithm in algebraic terms. Below, we present the definition for an elementary quantum gate and an important result from \cite{Daniela} that bounds the algebraic complexity of a quantum algorithm described by a unitary operator $\mathbf{\hat{U}} \in SU(N)$ for an appropriate $N$. 

\begin{definition}{Elementary Quantum Gate:} A unitary matrix $\mathbf{\hat{U}}\in SU(N)$ is called an elementary quantum gate if there exists a pair $(\mathbf{\hat{V}}, i)\in SU(2)\times Emb^{*}(SU(2), SU(N))$ such that $\mathbf{\hat{U}}=i(\mathbf{\hat{V}})$. The set of al elementary quantum gates in $SU(N)$ is denoted by $QG(N)$. Where 
\begin{equation}
Emb^{*}(SU(2),SU(N)) :=
\end{equation}
\smaller
\begin{equation}
\big\{f \in Emb(SU(2), SU(N)) \big| f(\mathbf{\hat{U}})^{\dagger} = f(\mathbf{\hat{U}}^{\dagger}) \;\;\forall\;\; \mathbf{\hat{U}}\in SU(2)\big\}
\end{equation}
\normalsize
\end{definition}
The main result of \cite{Daniela} is the following theorem which has twofold ramifications; firstly, it shows that the set $QG(N)$ is universal and secondly, it gives an upper bound to the algebraic-complexity of a quantum algorithm described by a unitary matrix $\mathbf{\hat{U}}\in SU(N)$.
\begin{definition4}{Universality of $QG(N)$ and upper bound on algebraic complexity: }
The set of quantum gates $QG(N)$ forms a universal dictionary for the unitary group $SU(N)$; that is, for every $U\in SU(N)$, there exists an integer $m=\frac{N(N-1)}{2}$ such that $\mathbf{\hat{U}}$ can be expressed as a product of $m$ elementary quantum gates. 
\end{definition4}
We are now ready to formally define algebraic complexity.
\begin{definition}{Algebraic-Complexity:}
A quantum circuit of length $l\geq 0$ (algebraic-complexity $l$) for a unitary matrix $\mathbf{\hat{U}}\in SU(N)$ is defined as follows: $l = 0$ iff $\mathbf{\hat{U}} = \mathbb{I}_{N}$. Otherwise, a quantum circuit is a finite sequence of elementary quantum gates $\big\{\mathbf{\hat{U}}_{1}, .., \mathbf{\hat{U}}_{l}\big\} \subset QG(N)\big\{\mathbb{I}_{N}\big\}$, satisfying 
\begin{equation}
\label{eqn:prod}
\mathbf{\hat{U}} = \mathbf{\hat{U}}_{l}\mathbf{\hat{U}}_{l-1}\cdot\cdots\mathbf{\hat{U}}_{1}
\end{equation}
with the additional condition that $\mathbf{\hat{U}}_{i+1}\mathbf{\hat{U}}_{i}\neq\mathbb{I}_{N}$ for all $1\leq i< l$.
This implies that if $\mathbf{\hat{U}}\in QG(N)$ has circuit length $l\geq 1$, then for each $1\leq k \leq l$, there exists a pair $(\mathbf{\hat{V}}_{k}, i_{k})\in SU(2)\times Emb^{*}(SU(2), SU(N))$ such that $\mathbf{\hat{U}}_{k}=i_{k}(\mathbf{\hat{V_{k}}})$, and 
\begin{equation}
\mathbf{\hat{U}} = i_{l}(\mathbf{\hat{V}}_{l})i_{l-1}(\mathbf{\hat{V}})\cdots i_{1}(\mathbf{\hat{V}}_{1})   
\end{equation}
\end{definition}

The algebraic complexity gives a simple scheme by which to compare the costliness of various quantum algorithms. For the cases where the quantum gate formalism of quantum computing is employed, this translates seamlessly, and one needs only derive a decomposition of the form in (\ref{eqn:prod}) in order to obtain upper bounds on the algebraic-complexity. The fundamental resource is defined as unitary operators acting on a single qubit, which are just operations described by symmetries on the Bloch-Sphere. On the other hand, it is not clear how one might incorporate noise in such a definition of computational complexity in such a way that lower complexities ensue from noisy channels; in previous sections this was one of the desired properties for a generalization of the concept of geometric complexity for the cases where non-unitary maps were considered. It is clear that for noisy channels, using the RODE formalism presented in Section \ref{eqn:RODES}, multiple runs of the algorithm described by (\ref{eqn:prod}) would have to be executed to reduce statistical error, making the cost a multiple of $l$; furthermore, determining the complexity reduction in the sense characterized by the Channel Complexity in general with time-dependent $\mathbf{\hat{H}}_{tot}(t)$ is elusive. 

Just as the notion of the Elementary Quantum Gate constitutes the key ingredient in defining algebraic-complexity, the Carnot-Carath\'eodory metric on $SU(N)$ constructed from the Riemannian metric $\langle\cdot, \cdot\rangle_{\Omega}$ defined in (\ref{eqn:metric}) is fundamental in defining geometric complexity. To contrast the notion of Elementary Quantum Gate, we now coin the notion of Elementary-Constraints-Metric.
\begin{definition}{Elementary-Constraints-Metric for a quantum computation procedure: }
Let $\mathbf{\hat{U}}$ be any quantum computation procedure. Then, the Elementary-Constraints-Metric (ECM) is the  Carnot-Carath\'eodory metric $\langle\cdot,\cdot\rangle_{\Omega}$ characterized by the non-homogeneous Riemannian metric $\langle \cdot, \cdot\rangle_{\Omega}$ whose non-homogeneity is due to physical constraints; e.g. often only gates involving one or two spins are implementable. These metrics are assumed to be diagonal with respect to the spin-string basis; indeed, the spin-string basis is the basis upon which the elementary constraints shall be imposed; hence the "Elementary" in the name, i.e. spin-spin interactions are the fundamental interactions in quantum computing. 
\end{definition}
\end{section}
\section{Conclusion}
\;\;\; In this work, we have developed a geometric framework for analyzing the complexity of quantum channels, extending the previously established notions of geometric complexity in works such \cite{NielGeo}\cite{dorth1}\cite{dorth2}\cite{brandt}\cite{Jeff}, a notion pioneered by Michael Nielsen  to encompass general quantum processes. By invoking the theories Lie groups, and Riemannian geometry, we introduced a definition of \emph{channel complexity} that captures the  geometric cost associated with implementing a given quantum channel; in particular, such a measure may be use to calculate the reduction of geometric complexity induced by noisy channels as was exhibited in Section \ref{eqn:ex}. This formulation incorporates both coherent and dissipative non-unitary dynamics.

The primary contribution of this work has been to provide a measure of complexity that is geometrical and is bounded above by the geometric complexity of Nielsen associated with the self-dynamics of the system with no noise. Other related concepts, such as the noise complexity were also presented. Furthermore, the channel complexity was motivated via leveraging the concept of coherence capacity, presented in Section \ref{eqn:sec2}. In the process of presenting the notion of coherence capacity we have also proved Theorem \ref{eqn:devoheringpowerbound} which is a tighter bound to the analogous Theorem proven in \cite{Bu} via different techniques. Finally, we used the notion of Geometric Complexity to broaden some of the definitions of Quantum Computational Procedures presented in \cite{Daniela} to include geometric complexity on top of the algebraic notion of complexity already present.

Looking toward the future, it would be interesting to analyze how our geometric formalism applies to \emph{fault-tolerant quantum computation} and \emph{quantum error correction}. Indeed, error correction/ mitigation procedures are costly and it would be of interest to quantify said cost via geometric means. Furthermore, the interplay between \emph{channel complexity} and \emph{quantum scrambling} or \emph{quantum chaos} remains largely unexplored in the context of open quantum systems. Extending notions of operator growth and complexity dynamics \cite{cot} \cite{B2} to non-unitary evolutions may offer new diagnostics for ergodicity, thermalization, and quantum chaos in many-body systems; this will inevitably heavily involve the Hermitian forms that we have been using as our Riemannian metrics $\langle \cdot, \cdot\rangle_{\Omega}$ in a very active manner.  

While our definitions are formally well-founded, algorithms for estimating channel complexity have yet to be explored. Future work consecrated to the analysis of specific models via numerical methods would be worth the effort. Moreover, extending our geometric approach to \emph{continuous-variable systems} and \emph{quantum field theories} would allow for broader applicability in quantum optics and relativistic settings. This may involve generalizing the purification framework to infinite-dimensional Hilbert spaces and understanding how complexity behaves under spacetime symmetries.However, such generalizations might require us to leave the realm of Riemannian geometry and enter the field of Finsler manifolds and or Banach manifolds since the underlying Hilbert spaces would now be infinite-dimensional. Finally, future work could also explore embedding our formalism within a broader \emph{resource-theoretic} or \emph{categorical} framework \cite{gour}, further exploring the relationship between geometric complexity and other resource measures like entanglement, coherence, and magic.

\end{multicols}

\newpage

\appendix 
\addcontentsline{toc}{section}{Appendices}
\section{Proof of Theorem \ref{eqn:ratethe}}
\label{eqn:ratethemdem}
\begin{proof}
\begin{equation}
   R_{\mathscr{E}}(\mathbf{\hat{H}}(t),\boldsymbol{\hat{\rho}}) = \frac{d}{dt}C_{\mathscr{E}}\big(\boldsymbol{\hat{\rho}}_{t}\big) = \frac{d}{dt}\bigg(\zeta\big(\boldsymbol{\hat{\rho}}_{t}\big) - \zeta\big(\mathscr{E}\big(\boldsymbol{\hat{\rho}}_{t}\big)\big)\bigg) = \frac{d}{dt}Tr\big\{\boldsymbol{\hat{\rho}}_{t}^{2}\big\} -\frac{d}{dt}Tr\big\{\mathscr{E}\big(\boldsymbol{\hat{\rho}}_{t}\big)^{2}\big\} = 
\end{equation}
\begin{equation}
2Tr\big\{\boldsymbol{\hat{\rho}}_{t}\frac{d}{dt}\boldsymbol{\hat{\rho}}_{t}\big\} -2Tr\big\{\mathscr{E}\big(\boldsymbol{\hat{\rho}}_{t}\big)\frac{d}{dt}\mathscr{E}\big(\boldsymbol{\hat{\rho}}_{t}\big)\big\} =  2\bigg(Tr\big\{\boldsymbol{\hat{\rho}}_{t}\frac{d}{dt}\boldsymbol{\hat{\rho}}_{t}\big\} -Tr\big\{\mathscr{E}\big(\boldsymbol{\hat{\rho}}_{t}\big)\mathscr{E}\big(\frac{d}{dt}\boldsymbol{\hat{\rho}}_{t}\big)\big\}\bigg) =   
\end{equation}
\begin{equation}
-2i\bigg(Tr\big\{\boldsymbol{\hat{\rho}}_{t}\big[\mathbf{\hat{H}}(t),\boldsymbol{\hat{\rho}}_{t}\big]\big\} -Tr\big\{\mathscr{E}\big(\boldsymbol{\hat{\rho}}_{t}\big)\mathscr{E}\big(\big[\mathbf{\hat{H}}(t),\boldsymbol{\hat{\rho}}_{t}\big]\big)\big\}\bigg) =
 \end{equation}
\begin{equation}
2i\bigg(Tr\big\{\mathscr{E}\big(\boldsymbol{\hat{\rho}}_{t}\big)\mathscr{E}\big(\big[\mathbf{\hat{H}}(t),\boldsymbol{\hat{\rho}}_{t}\big]\big)\big\}\bigg)
\end{equation}
since $Tr\big\{\mathbf{\hat{A}}\big[\mathbf{\hat{B}},\mathbf{\hat{A}}\big]\big\} = Tr\big\{\mathbf{\hat{A}}\mathbf{\hat{B}}\mathbf{\hat{A}}- \mathbf{\hat{A}}\mathbf{\hat{A}}\mathbf{\hat{B}}\big\} = Tr\big\{\mathbf{\hat{A}}\mathbf{\hat{B}}\mathbf{\hat{A}}-\mathbf{\hat{A}}\mathbf{\hat{B}}\mathbf{\hat{A}}\big\} = 0$ by employment of the cycliciy of the trace in the final step. Furthermore, note that
\begin{equation}
Tr\big\{\mathscr{E}\big(\boldsymbol{\hat{\rho}}_{t}\big)\mathscr{E}\big(\big[\mathbf{\hat{H}}(t),\boldsymbol{\hat{\rho}}_{t}\big]\big)\big\} = Tr\big\{ \sum_{i}\mathbf{\hat{P}}_{i}\boldsymbol{\hat{\rho}}_{t}\mathbf{\hat{P}}_{i}\big[\mathbf{\hat{H}}(t),\boldsymbol{\hat{\rho}}_{t}\big]\mathbf{\hat{P}}_{i}\big\}=
\end{equation}
\begin{equation}
\sum_{i}Tr\big\{\mathbf{\hat{P}}_{i} \mathbf{\hat{P}}_{i}\boldsymbol{\hat{\rho}}_{t}\mathbf{\hat{P}}_{i}\big[\mathbf{\hat{H}}(t),\boldsymbol{\hat{\rho}}_{t}\big]\big\} = Tr\big\{\mathscr{E}\big(\boldsymbol{\hat{\rho}}_{t}\big)\big[\mathbf{\hat{H}}(t),\boldsymbol{\hat{\rho}}_{t}\big]\big\} = 
\end{equation}
\begin{equation}
 Tr\big\{\mathscr{E}\big(\boldsymbol{\hat{\rho}}_{t}\big)\mathbf{\hat{H}}(t)\boldsymbol{\hat{\rho}}_{t}-\mathscr{E}\big(\boldsymbol{\hat{\rho}}_{t}\big)\boldsymbol{\hat{\rho}}_{t}\mathbf{\hat{H}}(t)\big\}  = Tr\big\{\boldsymbol{\hat{\rho}}_{t}\mathscr{E}\big(\boldsymbol{\hat{\rho}}_{t}\big)\mathbf{\hat{H}}(t)-\mathscr{E}\big(\boldsymbol{\hat{\rho}}_{t}\big)\boldsymbol{\hat{\rho}}_{t}\mathbf{\hat{H}}(t)\big\} =  
\end{equation}
\begin{equation}
Tr\big\{\big[\boldsymbol{\hat{\rho}}_{t},\mathscr{E}\big(\boldsymbol{\hat{\rho}}_{t}\big)\big]\mathbf{\hat{H}}(t)\big\}
\end{equation}
Hence, 
\begin{equation}
R_{\mathscr{E}}(\mathbf{H}(s),\boldsymbol{\hat{\rho}})\Big|_{t=t^\prime}=2iTr\big\{\big[\boldsymbol{\hat{\rho}}_{t^\prime},\mathscr{E}\big(\boldsymbol{\hat{\rho}}_{t^{\prime}}\big)\big]\mathbf{H}(t^\prime)\big\}
\end{equation}
which proves the first part of the proposition. Now, a simple application of H\"{o}lder's inequality for Schatten classes (i.e. trace class ideals), namely $\|\mathbf{\hat{A}}\mathbf{\hat{B}}\|_{1}\leq\|\mathbf{\hat{A}}\|_{p}\|\mathbf{\hat{B}}\|_{q} ,\frac{1}{p}+\frac{1}{q} = 1$, along with the fact that $|Tr\{\mathbf{\hat{A}}\}|\leq \|\mathbf{\hat{A}}\|_{1}$ yields the bound 
\begin{equation}
\big|R_{\mathscr{E}}(\mathbf{\hat{H}}(s),\boldsymbol{\hat{\rho}})\big|\Big|_{t=t^\prime}\leq \big\|\big[\boldsymbol{\hat{\rho}}_{t^\prime},\mathscr{E}\big(\boldsymbol{\hat{\rho}}_{t^{\prime}}\big)\big]\big\|_{H.S.}\big\|\mathbf{\hat{H}}(t^{\prime})\big\|_{H.S.}
\end{equation}
where $\|\cdot\|_{H.S}:=\|\cdot\|_{2}$ ( we use the subscript $H.S.$ to represent Hilbert-Schmidt). It is obvious that $\big\|\big[\boldsymbol{\hat{\rho}}_{t^\prime},\mathscr{E}\big(\boldsymbol{\hat{\rho}}_{t^{\prime}}\big)\big]\big\|_{H.S.}\leq\sup_{t^\prime\in[0,t]}\big\|\big[\boldsymbol{\hat{\rho}}_{t^\prime},\mathscr{E}\big(\boldsymbol{\hat{\rho}}_{t^{\prime}}\big)\big]\big\|_{H.S.}$, we need therefore only show that $\big\|\big[\boldsymbol{\hat{\rho}}_{t^\prime},\mathscr{E}\big(\boldsymbol{\hat{\rho}}_{t^{\prime}}\big)\big]\big\|_{H.S.}\leq \sqrt{2}$ for all $t^\prime\in[0,t]$ in order to complete the proof.
\begin{equation}
\big\|\big[\boldsymbol{\hat{\rho}}_{t^\prime},\mathscr{E}\big(\boldsymbol{\hat{\rho}}_{t^{\prime}}\big)\big]\big\|_{H.S.} = \sqrt{Tr\big\{\big[\boldsymbol{\hat{\rho}}_{t^\prime},\mathscr{E}\big(\boldsymbol{\hat{\rho}}_{t^{\prime}}\big)\big]^{\dagger}\big[\boldsymbol{\hat{\rho}}_{t^\prime},\mathscr{E}\big(\boldsymbol{\hat{\rho}}_{t^{\prime}}\big)\big]\big\}} = 
\end{equation}
\begin{equation}
\sqrt{Tr\big\{\big[\mathscr{E}\big(\boldsymbol{\hat{\rho}}_{t^\prime}\big),\boldsymbol{\hat{\rho}}_{t^{\prime}}\big]\big[\boldsymbol{\hat{\rho}}_{t^\prime},\mathscr{E}\big(\boldsymbol{\hat{\rho}}_{t^{\prime}}\big)\big]\big\}} = \sqrt{2Tr\big\{\mathscr{E}\big(\boldsymbol{\hat{\rho}}_{t^\prime}\big)^{2}\boldsymbol{\hat{\rho}}_{t^{\prime}}^{2}\big\}-2Tr\big\{\big(\boldsymbol{\hat{\rho}}_{t^\prime}\mathscr{E}\big(\boldsymbol{\hat{\rho}}_{t^{\prime}}\big)\big)^{2}\big\}} \leq 
\end{equation}
\begin{equation}
\sqrt{2Tr\big\{\mathscr{E}\big(\boldsymbol{\hat{\rho}}_{t^\prime}\big)^{2}\boldsymbol{\hat{\rho}}_{t^{\prime}}^{2}\big\}} \leq  \sqrt{2Tr\big\{\mathscr{E}\big(\boldsymbol{\hat{\rho}}_{t^\prime}\big)^{2}\big\}Tr\big\{\boldsymbol{\hat{\rho}}_{t^{\prime}}^{2}\big\}} \leq \sqrt{2Tr\big\{\mathscr{E}\big(\boldsymbol{\hat{\rho}}_{t^\prime}\big)\big\}Tr\big\{\boldsymbol{\hat{\rho}}_{t^{\prime}}\big\}} = \sqrt{2}
\end{equation}
where we have employed the fact that $Tr\{\mathbf{\hat{A}}\mathbf{\hat{B}}\}\leq Tr\{\mathbf{\hat{A}}\}Tr\{\mathbf{\hat{B}}\}$ whenever $\mathbf{\hat{A}}$ and $\mathbf{\hat{B}}$ are positive definite trace class operators. 
\end{proof}

\section{Proof of equation (\ref{eqn:exa})}
\label{eqn:apexa}
\begin{equation}
\mathcal{G}_{hs}\big(\mathscr{N}_{t}\circ \Lambda_{t}\big(\boldsymbol{\hat{\rho}}\big)\big) = 
\end{equation}

\begin{equation}
\frac{t}{\sqrt{d^{2}_{tot}-1}}\Big(\|\mathbf{\hat{H}}_{tot}\|_{hs}-\varepsilon Tr\big\{|\{\mathbf{\hat{A}}_{S},\mathbf{\hat{H}}_{S}\}\otimes\sum_{i}E_{i}\big|E_{i}\big\rangle\big\langle E_{i}\big||\big\}\Big)
\end{equation}
\begin{equation}
= \frac{t}{\sqrt{d^{2}_{tot}-1}}\Big(\sqrt{Tr\big\{\mathbf{\hat{H}}_{S}^{2}+\varepsilon\{\mathbf{\hat{A}}_{S},\mathbf{\hat{H}}_{S}\}\otimes\sum_{i}E_{i}\big|E_{i}\big\rangle\big\langle E_{i}
|\big\}+\varepsilon^{2}\mathbf{\hat{A}}_{S}\otimes\sum_{i}E^{2}_{i}|E_{i}\rangle\langle E_{i}|\big\}}
\end{equation}
\begin{equation}
-\sqrt{\varepsilon Tr\big\{|\{\mathbf{\hat{A}}_{S},\mathbf{\hat{H}}_{S}\}\otimes\sum_{i}E_{i}\big|E_{i}\big\rangle\big\langle E_{i}\big||+\varepsilon^{2}\mathbf{\hat{A}}_{S}\otimes\sum_{i}E^{2}_{i}|E_{i}\rangle\langle E_{i}|\big\}}\Big) =
\end{equation}

\begin{equation}
\frac{t}{\sqrt{d^{2}_{tot}-1}}\Big(\sqrt{Tr\big\{\mathbf{\hat{H}}_{S}^{2}\big\}+2\varepsilon N\langle\mathbf{\hat{A}}_{S},\mathbf{\hat{H}}_{S}\rangle_{hs}\langle E_{i}\rangle+\varepsilon^{2}NTr\{\mathbf{\hat{A}}^{2}_{S}\}\langle E^{2}_{i}\rangle}
\end{equation}
\begin{equation}
-\sqrt{2\varepsilon N\langle\mathbf{\hat{A}}_{S},\mathbf{\hat{H}}_{S}\rangle_{hs}\langle E_{i}\rangle+\varepsilon^{2}NTr\{\mathbf{\hat{A}}^{2}_{S}\}\langle E^{2}_{i}\rangle}\Big) =
\end{equation}
\begin{equation}
 \frac{t}{\sqrt{d^{2}_{tot}-1}}\Big(\|\mathbf{\hat{H}}_{S}\|_{hs}+\frac{\varepsilon N\langle\mathbf{\hat{A}}_{S},\mathbf{\hat{H}}_{S}\rangle_{hs}\langle E_{i}\rangle}{\|\mathbf{\hat{H}}_{S}\|_{hs}}-\sqrt{\varepsilon N\langle\mathbf{\hat{A}}_{S},\mathbf{\hat{H}}_{S}\rangle_{hs}\langle E_{i}\rangle}\Big) +\mathcal{O}(\varepsilon^{3/2})  =
\end{equation}
\begin{equation}
 \frac{t}{\sqrt{d^{2}_{tot}-1}}\Big(\|\mathbf{\hat{H}}_{S}\|_{hs}-\sqrt{\varepsilon N\langle\mathbf{\hat{A}}_{S},\mathbf{\hat{H}}_{S}\rangle_{hs}\langle E_{i}\rangle}\Big(1-\frac{\sqrt{\varepsilon N\langle\mathbf{\hat{A}}_{S},\mathbf{\hat{H}}_{S}\rangle_{hs}\langle E_{i}\rangle}}{\|\mathbf{\hat{H}}_{S}\|_{hs}}\Big) +\mathcal{O}(\varepsilon^{3/2}) =  
\end{equation}
\begin{equation}
 \frac{t}{\sqrt{d^{2}_{tot}-1}}\Big(\|\mathbf{\hat{H}}_{S}\|_{hs}(1-\sqrt{\varepsilon}\Omega\big(1-\sqrt{\varepsilon}\Omega\big)\Big) +\mathcal{O}(\varepsilon^{3/2})  
\end{equation}

\end{document}